\theoremstyle{plain}
\declaretheorem[style=nopar]{lemma}
    \newtheorem{theorem}{Theorem}
    \newtheorem{consequence}{Corollary}[section]
    \newtheorem{statement}{Proposition}[section]
\theoremstyle{remark}
  \theoremstyle{definition}
        \newtheorem{cexample}{Counterexample}[section]
            \newtheorem{remark}{Remark}[section]
\newcommand{\Ker}[1]{\mathrm{Ker} \, #1}
\newcommand{\diff}[1]{\mathrm{d}  #1}
\newcommand{\diffFX}[2]{ \dfrac{\partial #1}{\partial #2} }
\newcommand{\diffXp}[1]{ \frac{\mathrm{d} }{\diff #1} }
\newcommand{\R}{\mathbb{R}}
\newcommand{\Complex}{\mathbb{C}}
\newcommand{\CP}{{\mathbb{C}}\mathrm{P}}
\newcommand{\Imm}[1]{\mathrm{Im} \, #1}
\newcommand{\Tr}[1]{\mathrm{Tr} \, #1}
\newcommand{\g}{\mathfrak{g}}
\newcommand{\so}{\mathfrak{so}}
\newcommand{\GL}{\mathrm{GL}}
\newcommand{\U}{\mathrm{U}}
\newcommand{\E}{\mathrm{E}}
\newcommand{\gl}{\mathfrak{gl}}
\newcommand{\un}{\mathfrak{u}}
\renewcommand{\deg}{\mathrm{deg}\,}
\title{Algebraic geometry and stability for integrable systems}
\author{Anton Izosimov\footnote{Moscow State University and Higher School of Economics. E-mail: a.m.izosimov@gmail.com}
}
\date{}
\begin{document}
\maketitle
\abstract{In 1970s, a method was developed for integration of nonlinear equations by means of algebraic geometry. Starting from a Lax representation with spectral parameter, the algebro-geometric method allows to solve the system explicitly in terms of theta functions of Riemann surfaces. However, the explicit formulas obtained in this way fail to answer qualitative questions such as whether a given singular solution is stable or not. In the present paper, the problem of stability for equilibrium points is considered, and it is shown that this problem can also be approached by means of algebraic geometry. }
\section{Introduction}
As is well-known, many finite-dimensional integrable systems can be explicitly solved by means of algebraic geometry.
The starting point for the algebro-geometric integration method is Lax representation.  A dynamical system is said to admit a \textit{Lax representation }\textit{with spectral parameter $\lambda$} if the following two conditions are satisfied.
\begin{enumerate}
\item The phase space of the system can be identified with a certain submanifold $\mathcal L$ of the space $\gl(n,\Complex) \otimes \Complex(\lambda) $ of matrix-valued functions of a complex variable $\lambda$.
\item Under this identification, equations of motion take the form
\begin{align}
\label{LaxRepr1}
\diffXp{t}  L_\lambda = [ L_\lambda,  A_\lambda(L_\lambda)]
\end{align}
where $L_\lambda \in \mathcal L$ is the phase variable, and $ A_\lambda$ is a mapping $A_\lambda \colon \mathcal L \to \gl(n,\Complex) \otimes \Complex(\lambda) $.  
\end{enumerate}
%
\par
 Starting from a Lax representation with spectral parameter, the algebro-geometric integration method allows to write the solution of the system with initial data $   L_\lambda =  L_\lambda^0$ in terms of theta functions associated with the algebraic curve 
\begin{align}
\label{char}
\det( L_\lambda^0 - \mu \E) = 0,
\end{align}
which is called the \textit{spectral curve}. We refer the reader to \cite{DKN, DMN, Dubrovin, babelon, Audin} and references therein for more details on the algebro-geometric integration method.\par
Despite the possibility to explicitly solve the equation \eqref{LaxRepr1} in terms of theta functions,  if we are interested in {qualitative} features of dynamics, theta-functional formulas seem to be of little use at least for the following reasons. Firstly, theta-functional solutions correspond to non-singular spectral curves, while most remarkable solutions, such as fixed points or stable periodic solutions, are related to degenerate curves. Secondly, theta-functional formulas provide solutions of the complexified system, and it is in general a difficult problem to describe real solutions. At the same time, many dynamical phenomena, such as stability, are related to the presence of a real structure.
\par

%
In the present paper we study the Lyapunov stability problem for systems which admit a Lax representation with spectral parameter. We show that this problem can also be approached by means of classical algebraic geometry, and that this approach is very natural and fruitful. Though we focus on stability of equilibrium points, we expect that our results can  be generalized to more general solutions.
We note that the relation between topology of integrable systems and algebraic geometry has been extensively studied by M.\,Audin and her collaborators \cite{Audin, audin2}, however it seems that their approach cannot be directly used to study the stability problem.\par
Before we formulate the main result of the paper, let us describe the class of Lax matrices which we consider. Firstly, for the sake of simplicity, we restrict ourselves to the case when $ L_\lambda$ is polynomial in parameter $\lambda$, i.e. when $\mathcal L \subset \gl(n,\Complex) \otimes \Complex[\lambda]$. Note that it is more standard to consider Lax matrices which are polynomial in $\lambda$ and $\lambda^{-1}$, i.e. which belong to the loop algebra $ \gl(n,\Complex) \otimes \Complex[\lambda,\lambda^{-1}]$. However this situation can be reduced to the polynomial case by multiplying $L_\lambda$ by a suitable power of $\lambda$.\par
 Our second assumption is the following: there exists an anti-holomorphic involution $\tau \colon  \Complex \to  \Complex$ and a complex number $\sigma$ of absolute value $1$ such that for each $ L_\lambda \in \mathcal L$ \begin{align}
\label{involution}
 L_{\tau(\lambda)} = \sigma L_\lambda^*.
\end{align}
The presence of such involution is a common feature of many integrable systems. The most standard examples are provided in Table \ref{table1}. See \cite{Manakov, bobenko1989, Jovanovich, Adler, Adler2} for more details on these systems and their Lax representations. 
\begin{table}
\centerline{\begin{tabular}{c|c|c|c}\textbf{Integrable system} & \textbf{Lax matrix} &\textbf{} $\tau$ & $\sigma$ 
 \\\hline Euler-Manakov top & {$L_\lambda = A+ B\lambda$}, & $\lambda \mapsto - \bar \lambda\vphantom{\int_a^b}$ & -1 
 \\ & $A^* = -A, B^* = B$ &&  \\\hline Kowalevski top & $L_\lambda = A + B\lambda + C\lambda^2,$ & $\lambda \mapsto - \bar \lambda\vphantom{\int_a^b}$ & 1 
  \\ & $A^*=A, B^* = -B, C^*=C$&& \\\hline Geodesic flow on ellipsoid & $L_\lambda = A + B\lambda + C\lambda^2,$ & $\lambda \mapsto - \bar \lambda\vphantom{\int_a^b}$ & 1 
  \\ & $A^*=A, B^* = -B, C^*=C$&& \\\hline Lagrange top & $L_\lambda = A + B\lambda + C\lambda^2$,& $\lambda \mapsto  \bar \lambda\vphantom{\int_a^b}$ & -1
   \\&  $A^*=-A, B^* = -B, C^*=-C$&& 
    \end{tabular}}
\caption{Integrable systems which admit Lax representation with $L_{\tau(\lambda)} = \sigma L_\lambda^*$. }\label{table1}
\end{table}

Let us now formulate the main result.
Let $  L_\lambda^0 \in \mathcal L$, and consider the associated spectral curve \eqref{char}. The involution $\tau$ induces an anti-holomorphic involution $$\widehat \tau \colon (\lambda, \mu) \mapsto (\tau(\lambda),\sigma\bar \mu)$$ on the spectral curve. We show that if  $  L_\lambda^0$ is a fixed point of  the Lax equation \eqref{LaxRepr1}, then, under some additional assumptions, a sufficient condition for its stability is that all singular points of the associated spectral curve are fixed points of $\widehat \tau$. If we interpret $\widehat \tau$ as a real structure, then this condition means that all singular points lie in the real part of the curve. We consider several examples in which this condition turns out to be necessary and sufficient.\par
Our first example is the Lagrange top. We use the algebro-geometric approach to recover the classical result that the rotation of a top is stable if the angular velocity is greater or equal than some critical value.\par
The second example is the top on a compact Lie algebras defined by a\textit{sectional operator}. This system is related to the so-called \textit{argument shift method}, see Mischenko and Fomenko \cite{MF}.\par
The third, and the most interesting, example is the free multidimensional rigid body, or Euler-Manakov top. It is a standard result that the rotation of a torque-free three-dimensional rigid body about the short or the long axis of inertia is stable, whereas the rotation about the middle axis is unstable. Using the algebro-geometric approach, we obtain a multidimensional generalization of this result. We note that this problem has previously been approached by different methods \cite{Marshall, Spiegler, Casu, Ratiu, JGP, nlin}, however no complete solution has been known.
\section{Stability for integrable and Lax systems}
Let $ \dot x = v(x)$ be a dynamical system on a manifold $X$, and assume that $f_1, \dots, f_N$ are its (in general, complex-valued) first integrals. The \textit{moment map} is a map $F \colon X \to \Complex^N$ which maps $x \in X$ to $(f_1(x), \dots, f_N(x))$.
\begin{statement}\label{stabProp}
Assume that $x_0 \in X$ is an isolated point in the level set of the moment map. Then $x_0$ is Lyapunov stable fixed point of $ \dot x = v(x)$.
\end{statement}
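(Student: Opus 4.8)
The plan is to combine two observations: the isolation hypothesis forces $x_0$ to be an equilibrium, and the squared norm of $F - F(x_0)$ serves as a Lyapunov function whose derivative along the flow vanishes identically.

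First I would verify that $x_0$ is actually a fixed point. Let $x(t)$ denote the trajectory with $x(0) = x_0$. Since each $f_i$ is a first integral, $F$ is constant along $x(t)$, so the whole trajectory lies in the level set $\mathcal{N} = F^{-1}(F(x_0))$. By hypothesis $x_0$ is isolated in $\mathcal{N}$, so there is a neighbourhood $U$ of $x_0$ with $U \cap \mathcal{N} = \{x_0\}$. By continuity $x(t) \in U$ for small $t$, and since $x(t) \in \mathcal{N}$ this forces $x(t) = x_0$; hence $v(x_0) = 0$.

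Next I set $g(x) = \|F(x) - F(x_0)\|^2 = \sum_{i} |f_i(x) - f_i(x_0)|^2$, a continuous (indeed smooth) real-valued function which, being built from first integrals, is itself a first integral and therefore constant along every trajectory. Working in a coordinate chart, I fix $r > 0$ small enough that the closed ball $\overline{B}_r = \overline{B}(x_0, r)$ is compact, is contained in a prescribed target neighbourhood $V$, and satisfies $\overline{B}_r \cap \mathcal{N} = \{x_0\}$ --- the last being possible precisely because $x_0$ is isolated in $\mathcal{N}$. On the sphere $S_r = \partial B_r$ the function $g$ is continuous and strictly positive, since $g$ vanishes only on $\mathcal{N}$ and $S_r$ avoids $\mathcal{N}$; by compactness $g \geq m$ on $S_r$ for some $m > 0$. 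Putting $W = \{x \in B_r : g(x) < m\}$ gives an open neighbourhood of $x_0$, as $g(x_0) = 0$. The trapping argument is then immediate: if a trajectory starts in $W$, then $g(x(t)) = g(x(0)) < m$ for all $t$; were the trajectory to leave $B_r$, by continuity it would first meet $S_r$ at some time $t_1$, forcing $g(x(t_1)) \geq m$, a contradiction. Hence the trajectory remains in $\overline{B}_r \subset V$ for all forward time (confinement to this compact set also guarantees that the trajectory exists for all $t \geq 0$), which is exactly Lyapunov stability.

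The point requiring care --- and the only place the hypothesis really enters --- is the positive definiteness of $g$ on the punctured ball $\overline{B}_r \setminus \{x_0\}$, equivalently the strict positivity of the minimum $m$ on $S_r$. This is where isolation of $x_0$ in $\mathcal{N}$ is indispensable: without it, $g$ could vanish along a continuum through $x_0$, so that both the trapping argument and even the fixed-point conclusion would break down. Everything else is a standard application of Lyapunov's direct method, with the conserved quantity $g$ playing the role of the Lyapunov function; note that we never need $\dot g < 0$, only $\dot g = 0$, since conservation already prevents the level of $g$ from ever rising.
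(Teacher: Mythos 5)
Your proof is correct and follows essentially the same route as the paper: first deducing that $x_0$ is a fixed point from the isolation hypothesis, then using $\sum_i |f_i(x)-f_i(x_0)|^2$ as a conserved Lyapunov function. The paper compresses the second step into the single remark that this function is a Lyapunov function; you have simply spelled out the standard trapping argument that remark implies.
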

\begin{proof}
Let $x(t)$ be the solution with $x(0) = x_0$. Then $F(x(t)) = F(x(0))$, so $x(t) \in F^{-1}(F(x_0))$. Since $x_0$ is isolated in $F^{-1}(F(x_0))$, this implies that $x(t) = x_0$, i.e. $x_0$ is a fixed point. To prove stability, note that
$$
f(x) = \sum_{i=1}^N |f_{i}(x) - f_{i}(x_0)|^2.
$$
is a Lyapunov function.
\end{proof}
\begin{remark}\label{unstab}
As was shown by Bolsinov and Borisov \cite{BolBorStab}, a similar statement is true for periodic trajectories: if a periodic trajectory coincides with a connected component of the level set of the moment map, then it is stable. Moreover, under some additional assumptions, the converse is also true. In our case, the following is true. Let $X$ be Poisson manifold, and let $ \dot x = v(x)$ be a Hamiltonian system. Assume that $f_1, \dots, f_N$ is a complete family of analytic first integrals in involution. Further, assume that the level sets of $F$ are compact, so that their connected components are invariant tori, and that $ \dot x = v(x)$ is a non-resonant system, which means that its trajectories are dense on almost all tori \cite{intsys}. Then the condition of Proposition \ref{stabProp} is necessary and sufficient. 

\end{remark}
Now, let us reformulate Proposition \ref{stabProp} for systems which admit a Lax representation with spectral parameter. Consider the space
 $$\mathcal P_{m,n} =\{ L_\lambda = B_m\lambda^m + \dots +  B_0 \in \gl(n,\Complex) \otimes \Complex[\lambda] \}$$ of $\gl(n,\Complex)$-valued polynomials of degree $m$, and let $\mathcal L \subset \mathcal P_{m,n}$ be its submanifold. Let $A_\lambda$ be a map $A_\lambda \colon \mathcal L \to \gl(n,\Complex) \otimes \Complex(\lambda) $, and assume that $\mathcal L$ is invariant with respect to the flow
\begin{align}\label{LaxRepr2}
\diffXp{t}  L_\lambda = [ L_\lambda,  A_\lambda(L_\lambda)].
\end{align}
To each $L_\lambda \in \mathcal L$ we can assign its {spectral curve}, i.e. an affine algebraic curve $C(L_\lambda)$ given by the equation
$
P(\lambda, \mu) =  0
$
where
$$
P(\lambda,\mu) = \det( L_\lambda - \mu \E)
$$
is the characteristic polynomial of $ L_\lambda$. 
The following is well-known.
\begin{statement}\label{curvInt}
Let $L_\lambda(t)$ be a solution of \eqref{LaxRepr2}. Then the curve $C(L_\lambda(t))$ does not depend on $t$.
\end{statement}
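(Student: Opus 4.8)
The plan is to prove the stronger statement that the characteristic polynomial $P(\lambda,\mu) = \det(L_\lambda - \mu\E)$ is itself independent of $t$; since $C(L_\lambda(t))$ is by definition the zero locus $\{P(\lambda,\mu)=0\}$, its invariance is then immediate. The coefficients of $P$ (as a polynomial in $\mu$) are, up to sign, the elementary symmetric functions of the eigenvalues of $L_\lambda$, and by Newton's identities these are polynomial expressions in the power sums $\Tr{(L_\lambda^k)}$, $k = 1, \dots, n$. Hence it suffices to show that each function $t \mapsto \Tr{(L_\lambda^k)}$ is constant along the flow \eqref{LaxRepr2}.

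The computation is the classical isospectrality argument. For a fixed value of $\lambda$ which is not a pole of $A_\lambda$, the quantity $A_\lambda(L_\lambda(t))$ is an honest matrix in $\gl(n,\Complex)$, and differentiating along \eqref{LaxRepr2} I would write
\begin{align*}
\diffXp{t}\,\Tr{(L_\lambda^k)} &= k\,\Tr{\left(L_\lambda^{k-1}[L_\lambda, A_\lambda]\right)} \\
&= k\left(\Tr{(L_\lambda^{k}A_\lambda)} - \Tr{(L_\lambda^{k-1}A_\lambda L_\lambda)}\right) = 0,
\end{align*}
where the last equality is just the cyclic invariance of the trace. Thus every power sum, and with it the whole characteristic polynomial, is conserved for each such $\lambda$.

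It remains to remove the restriction that $\lambda$ avoid the poles of $A_\lambda$, and this is the only genuine subtlety — everything else is routine. The point is that $A_\lambda$ is merely rational in $\lambda$, so the pointwise computation above is valid only for all but finitely many values of $\lambda$. However, since $L_\lambda$ is polynomial in $\lambda$, the coefficients of $P(\lambda,\mu)$ are polynomials in $\lambda$, and a polynomial whose $t$-derivative vanishes at infinitely many values of $\lambda$ has vanishing $t$-derivative identically. Hence $P(\lambda,\mu)$ does not depend on $t$ as a polynomial in both variables, which is the claim. Equivalently, one could integrate the linear equation $\diffXp{t}\,g_\lambda = -A_\lambda\,g_\lambda$ with $g_\lambda(0) = \E$ at generic $\lambda$ and verify that $L_\lambda(t) = g_\lambda(t)\,L_\lambda(0)\,g_\lambda(t)^{-1}$, exhibiting $L_\lambda(t)$ as conjugate to its initial value; conjugate matrices share a characteristic polynomial, and one again extends from generic $\lambda$ to all $\lambda$ by polynomiality.
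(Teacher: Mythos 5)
Your proof is correct and takes essentially the same route as the paper: both arguments establish conservation of the power traces $\Tr{(L_\lambda)^k}$ for each fixed $\lambda$ and $k$ (the paper by observing that $(L_\lambda)^k$ satisfies the same Lax equation and taking the trace, you by differentiating directly and invoking cyclic invariance), and then conclude that the coefficients of the characteristic polynomial $P(\lambda,\mu)$ are integrals of motion. Your extra care with the poles of $A_\lambda$ — extending from generic $\lambda$ to all $\lambda$ by polynomiality of $P$ in $\lambda$ — is a legitimate detail that the paper passes over silently, but it does not constitute a different approach.
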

\begin{proof}
Equation (\ref{LaxRepr2}) implies that
\begin{align*}
\diffXp{t} ( L_\lambda)^k = [( L_\lambda)^k,  A_\lambda]
\end{align*}
for any positive integer $k$. Therefore, the function $\Tr  ( L_\lambda)^k$ is an integral of motion for any values of $k$ and $\lambda$, and so are the coefficients of the characteristic polynomial $P(\lambda, \mu)$.
\end{proof}
\begin{statement}\label{LaxStab}
Let $L_\lambda^0 \in \mathcal L$, and assume that $L_\lambda^0$ is an isolated point in the isospectral variety $$S(L_\lambda^0) = \{ L_\lambda \in \mathcal L \mid C(L_\lambda) = C(L_\lambda^0)\}.$$
Then $L_\lambda^0$ is a stable fixed point of (\ref{LaxRepr2}).
\end{statement}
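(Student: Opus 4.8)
The plan is to deduce this proposition directly from Proposition \ref{stabProp} by exhibiting a suitable moment map whose level set is controlled by the spectral curve. The first step is to collect the coefficients of the characteristic polynomial
\[
P(\lambda,\mu) = \det( L_\lambda - \mu \E) = \sum_{i,j} c_{ij}(L_\lambda)\, \lambda^i \mu^j
\]
into a single map $F \colon \mathcal L \to \Complex^N$, $F(L_\lambda) = (c_{ij}(L_\lambda))$. Since $L_\lambda$ has degree $m$ and size $n$, the polynomial $P$ has degree at most $mn$ in $\lambda$ and exactly $n$ in $\mu$, so $F$ has finitely many components and is a genuine map into a finite-dimensional space $\Complex^N$, i.e. a moment map in the sense introduced at the beginning of this section.

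Next I would observe that each $c_{ij}$ is a first integral of the Lax flow \eqref{LaxRepr2}. This is precisely the content of the proof of Proposition \ref{curvInt}: the functions $\Tr (L_\lambda)^k$, and hence the coefficients of $P(\lambda,\mu)$, are constant along every solution. Thus $F$ is a collection of first integrals to which Proposition \ref{stabProp} may be applied.

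The key point is then to compare the level set $F^{-1}(F(L_\lambda^0))$ with the isospectral variety $S(L_\lambda^0)$. By definition $F(L_\lambda) = F(L_\lambda^0)$ holds if and only if the characteristic polynomials of $L_\lambda$ and $L_\lambda^0$ coincide as polynomials in $\lambda$ and $\mu$, and equality of polynomials forces equality of their zero sets; hence $F^{-1}(F(L_\lambda^0)) \subseteq S(L_\lambda^0)$. Consequently, if $L_\lambda^0$ is isolated in $S(L_\lambda^0)$, say $U \cap S(L_\lambda^0) = \{L_\lambda^0\}$ for some neighbourhood $U$, then a fortiori $U \cap F^{-1}(F(L_\lambda^0)) = \{L_\lambda^0\}$, so $L_\lambda^0$ is isolated in the level set of $F$ as well. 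Applying Proposition \ref{stabProp} to $F$ at the point $L_\lambda^0$ then yields that $L_\lambda^0$ is a Lyapunov stable fixed point of \eqref{LaxRepr2}.

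I do not anticipate a genuine obstacle here, as the statement is essentially a repackaging of Proposition \ref{stabProp}. The only point that demands a little care is the direction of the inclusion between the level set $F^{-1}(F(L_\lambda^0))$ and the isospectral variety $S(L_\lambda^0)$: equal characteristic polynomials determine equal curves but not conversely, so in principle $S(L_\lambda^0)$ could be strictly larger. Fortunately this inclusion runs the easy way, since we are handed isolation in the larger set $S(L_\lambda^0)$, which automatically transfers to the smaller level set; had the hypothesis been phrased in terms of the level set itself, one would have had to argue that the distinction between the two sets does not affect local isolation near $L_\lambda^0$.
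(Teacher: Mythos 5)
Your proof is correct and follows essentially the same route as the paper, whose own proof is just the one-line observation that the statement follows from Proposition \ref{stabProp} and Proposition \ref{curvInt}; you have simply made explicit the moment map built from the coefficients of $P(\lambda,\mu)$ and the inclusion $F^{-1}(F(L_\lambda^0)) \subseteq S(L_\lambda^0)$. Your remark that isolation transfers the easy way (from the possibly larger isospectral variety to the level set) is exactly the detail the paper leaves implicit.
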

\begin{proof}
This follows from Proposition \ref{stabProp} and Proposition \ref{curvInt}.
\end{proof}
In the next section, we present algebro-geometric conditions which imply the hypothesis of Proposition \ref{LaxStab}.
\section{A lemma on polynomial matrix pencils and stability theorem}\label{agLemma}
Let $B \in  \gl(n,\Complex)$ be a fixed $n \times n$ matrix, and let $$\mathcal P_{m,n}(B) =\{ L_\lambda = B_m\lambda^m + \dots +  B_0 \in \gl(n,\Complex) \otimes \Complex[\lambda] \mid B_m = B  \}$$
be the set of $ \gl(n,\Complex)$-valued polynomials of degree $m$ with leading coefficient $B$. 
To each matrix polynomial $L_\lambda \in \mathcal P_{m,n}(B)$ we assign its {spectral curve}, i.e. an affine algebraic curve $C(L_\lambda)$ given by the equation
$
P(\lambda, \mu) =  0
$
where
$$
P(\lambda,\mu) = \det( L_\lambda - \mu \E).
$$The space $\mathcal P_{m,n}(B)$ carries an action of the {gauge group} $$G  = {\{ Q \in \GL(n,\Complex) \mid [Q,B] = 0\}}\big/{\{\nu E \mid \nu \in \Complex^*\}}.$$ This action is given by $L_\lambda \mapsto Q^{-1}L_\lambda Q$, and for each polynomial $L_\lambda \in \mathcal P_m(B)$ we have
$$
 C(L_\lambda) = C(Q^{-1}L_\lambda Q), 
$$
i.e. the spectral curve is the same for matrix polynomials belonging to the same gauge group orbit. However, the converse is not true: two matrix polynomials which have the same spectral curve do not necessarily belong to the same gauge group orbit. 
\begin{remark}
More precisely, let us consider the variety of matrix polynomials isospectral with $L_\lambda$, i.e. the set
$$
S(L_\lambda) = \{ M_\lambda \in \mathcal P_{m,n}(B) \mid C(M_\lambda) = C(L_\lambda)\}.
$$
Then, if we assume that the spectral curve of $L_\lambda$ is non-singular, and that $B$ has simple spectrum, the quotient $S(L_\lambda)/{G}$ can be identified with a Zariski open subset of the Jacobian of the spectral curve, see van Moerbeke and Mumford \cite{mvm}. In the case when the spectral curve is singular, the  description of the quotient is not that transparent, however it is true that $S(L_\lambda) / G$ contains a Zariski open subset which can be identified with a Zariski open subset of the generalized Jacobian of the spectral curve  \cite{mvm}. In particular, the dimension of the isospectral variety is still much bigger than the dimension of the $G$-orbit. 
\end{remark}

 The situation changes drastically if we restrict our attention to matrix-valued polynomials which admit an anti-holomorphic involution.
Let $\tau \colon \Complex \to \Complex$ be an anti-holomorphic involution, and let $\sigma \in \mathrm S^1 = \{z \in \Complex \mid |z|=1\}$. Define
$$\widetilde{\mathcal P}_{m,n}(B, \tau, \sigma) =\{  L_\lambda = B_m\lambda^m + \dots +  B_0  \in \gl(n,\Complex) \otimes \Complex[\lambda]   \mid  L_{\tau(\lambda)} = \sigma L_\lambda^*, \, B_m = B \}.$$
In this case, the gauge group is $$G  = \{ Q \in \U(n) \mid [Q,B] = 0\} \big/  \{\nu E \mid \nu \in \mathrm{S}^1\}.$$For each $L_\lambda \in \widetilde{\mathcal P}_{m,n}(B, \tau, \sigma) $, there is an anti-holomorphic involution $$\widehat \tau \colon (\lambda, \mu) \mapsto (\tau(\lambda), \sigma\overline \mu)$$ on its spectral curve. Denote by $\mathrm{Fix}\,\widehat \tau$ the fixed points set of this involution. If we interpret $\widehat \tau$ as a real structure, then $\mathrm{Fix}\,\widehat \tau$ is the set of real points on the curve. 
%
\begin{lemma}\label{lemma11}
Let $L_\lambda^0 \in \widetilde{\mathcal P}_{m,n}(B, \tau, \sigma) $, and assume that the leading term $B$ has simple spectrum. Let $C = C(L_\lambda^0)$ be the spectral curve associated with $L_\lambda^0$. Assume that
\begin{enumerate}
{\item all singular points of $C$ lie in $ \mathrm{Fix}\,\widehat \tau$;}
\item each irreducible component of $C$ is smooth and has genus zero;
\item each intersection between two distinct irreducible components is at worst of order $2$.
\end{enumerate}
Then the isospectral variety 
$$
S = \{ L_\lambda \in \mathcal P_{m,n}(B) \mid C(L_\lambda) = C\}.
$$
 coincides with the gauge group orbit of $L_\lambda^0$, i.e. if $L_\lambda \in\widetilde{\mathcal P}_{m,n}(B, \tau, \sigma) $, and $C(L_\lambda) = C$, then there exists $Q \in G$ such that
$${ L_\lambda} =  Q^{-1} L_\lambda^0  Q.$$
\end{lemma}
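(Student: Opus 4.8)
The natural framework is the spectral (eigenvector-bundle) correspondence of van Moerbeke and Mumford, already invoked in the Remark above. To every $L_\lambda$ with spectral curve $C$ I attach its eigenvector sheaf $E(L_\lambda)$: over a smooth point $(\lambda,\mu)\in C$ at which $L_\lambda$ has simple spectrum one sets the fibre to be the eigenline $\Ker(L_\lambda-\mu\E)\subset\Complex^n$, and this extends to a torsion-free rank-one sheaf on $C$ whose isomorphism class depends only on the orbit of $L_\lambda$ under the complex gauge group $G_\Complex=\{Q\in\GL(n,\Complex)\mid[Q,B]=0\}/\Complex^*$. The assignment $L_\lambda\mapsto E(L_\lambda)$ identifies the complex isospectral variety modulo $G_\Complex$ with an open subset of the generalized Jacobian $\mathrm{Pic}(C)$. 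Thus the Lemma will follow once I show that the reality and positivity constraints imposed by the involution pin down the eigenvector sheaf up to the \emph{compact} gauge group $G$; equivalently, that the admissible sheaves form a single $G$-orbit.

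The plan is to read off these constraints from \eqref{involution}. Taking adjoints in $L_{\tau(\lambda)}=\sigma L_\lambda^*$ shows that an eigenvector of $L_\lambda$ at $(\lambda,\mu)$ is carried by $\widehat\tau$ to a \emph{left} eigenvector at $(\tau(\lambda),\sigma\bar\mu)$; hence $\widehat\tau^*\overline{E(L_\lambda)}$ is canonically the sheaf of left eigenvectors, the dual of $E(L_\lambda)$. Pairing an eigenvector against a co-eigenvector by the standard Hermitian product on $\Complex^n$ then produces a nondegenerate, positive-definite Hermitian pairing relating $E(L_\lambda)$ to $\widehat\tau^*\overline{E(L_\lambda)}$. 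In other words the class of $E(L_\lambda)$ is a fixed point of the induced anti-holomorphic involution on $\mathrm{Pic}(C)$, and it comes equipped with a positive Hermitian metric compatible with the gluing of the sheaf at the singular points of $C$.

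Next I would reduce everything to local data at the singular points and use the three hypotheses. Since each component $C_i$ is smooth of genus zero it is isomorphic to $\CP^1$, on which a line bundle is rigid (fixed by its degree); because $B$ has simple spectrum the $n$ points of $C$ over $\lambda=\infty$ and the values of $E(L_\lambda)$ there are determined by the eigenbasis of $B$, so the degrees, and hence the restriction of $E(L_\lambda)$ to every component, are fixed. Consequently all moduli of $\mathrm{Pic}^0(C)\cong(\Complex^*)^{p_a}$ reside in the gluing parameters at the singular points. By hypothesis every singular point lies in $\mathrm{Fix}\,\widehat\tau$ and has order at most $2$, so each is an ordinary double point at which two smooth branches cross, both branches being separately preserved by $\widehat\tau$; the involution therefore acts on the corresponding gluing $\Complex^*$ with one-real-dimensional fixed locus, and the positive pairing of the previous step forces the gluing parameter to be a positive real number times a unit phase. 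These residual phases are exactly the freedom of $G=\{Q\in\U(n)\mid[Q,B]=0\}/\mathrm S^1$, which (as $B$ is normal with simple spectrum) is a torus acting by independent phases in the eigenbasis of $B$; checking that it acts transitively on the admissible tuples of phases yields $L_\lambda=Q^{-1}L_\lambda^0Q$ for some $Q\in G$.

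The main obstacle is precisely this last, local-to-global step: verifying that positive-definiteness collapses each gluing $\Complex^*$ to $\R_{>0}$ times a phase, and that the phases at all $p_a$ nodes can be simultaneously absorbed by a single $Q\in G$. This is where the two sharpest hypotheses enter. Were some singular point not fixed by $\widehat\tau$, it would be exchanged with its conjugate, and the fixed locus of the involution on the two glued $\Complex^*$ factors would be two-real-dimensional, contributing moduli that no compact gauge transformation could remove; were a singular point of higher order present, the local contribution to the generalized Jacobian would be more degenerate than a single copy of $\Complex^*$ — the eigenvector sheaf need not even be locally free there — and the description of each gluing as a positive real times a phase would fail. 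I would finally confirm that the reconstruction of a matrix polynomial from $(E(L_\lambda),\text{metric})$ lands back in $\widetilde{\mathcal P}_{m,n}(B,\tau,\sigma)$ and is unique modulo $G$, which completes the identification of the isospectral variety with the single gauge orbit of $L_\lambda^0$.
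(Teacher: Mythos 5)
Your route is genuinely different from the paper's: you propose to invoke the van Moerbeke--Mumford correspondence, view the complex isospectral variety modulo the complexified gauge group as a subset of the generalized Jacobian $\mathrm{Pic}(C)$, and then argue that the anti-holomorphic involution $\widehat\tau$ together with a positivity constraint collapses the gluing moduli onto a single orbit of the compact gauge group. This is close in spirit to the generalization sketched in Remark \ref{weak}, and it is an attractive program. But as written it is a program, not a proof: the step you yourself call ``the main obstacle'' --- that the Hermitian pairing induced by $L_{\tau(\lambda)}=\sigma L_\lambda^*$ is positive (or even nondegenerate), that it fixes the modulus of each gluing parameter, and that the torus $G$ acts transitively on the residual phases --- is never carried out, and that step \emph{is} the content of the lemma.

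There is also a concrete error that would block the argument even if the positivity step were supplied: you assert that every singular point of $C$ is ``an ordinary double point at which two smooth branches cross.'' Condition 3 of Lemma \ref{lemma11} allows intersections of order $2$, i.e.\ tangencies of two smooth branches (tacnodes), as well as more than two branches through one point (concurrent lines occur in the Mischenko--Fomenko example). At a tacnode the local contribution to the generalized Jacobian is not a copy of $\Complex^*$ but contains an additive factor $\Complex$, so your description of the gluing datum as ``positive modulus times phase'' fails exactly there; and the tangency case is not a degenerate corner --- it is the critical sleeping top $m^2=4(a+b)$ and the Hamiltonian Hopf point of the four-dimensional rigid body, cases the lemma is specifically needed for. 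The paper's proof confronts this head-on in Proposition \ref{lemma3}: Condition 1 forces $L_{\lambda_0}$ to be normal at every singular point (killing Jordan blocks), a Vandermonde argument in the first derivatives $\mu_i'(\lambda_0)$ separates branches with distinct tangents, and for tangent branches a second-derivative computation exploiting $L^*=\alpha L$ and $(L')^*=\beta L'$ (reality of the whole pencil, not merely normality at one point) shows that the eigenvectors over the singular point are still linearly independent. With that independence, the eigenvector function on the normalization is pinned down by pole counting (the function $r$, Riemann--Hurwitz, and genus zero), and a final polar-decomposition argument replaces the conjugating matrix by a unitary one. To rescue your approach you would have to prove the positivity/transitivity claim and redo the local analysis at tacnodes and ordinary multiple points --- which is precisely where the paper's real work lies.
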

The proof is given in Section \ref{proofSec}.
\begin{remark}\label{weak}
We note that since each irreducible component of $C$ is smooth, the singularities of $C$ are exactly those points which belong to at least two components.
At the same time, the smoothness condition can actually be avoided, however if irreducible components of $C$ have self-intersections, they should also satisfy Conditions 1 and 3 of the lemma. We require smoothness not to go deep into singularity theory of algebraic curves. In all examples that we consider, irreducible components of the spectral curve are indeed smooth.\par
Condition 3 can also apparently be weakened. However, in all our examples, irreducible components of the spectral curve are either lines or quadrics, so this condition is automatically satisfied.\par
The two remaining conditions, namely Condition 1 and condition on the genus, are crucial. In Section \ref{ceSect} we consider several counterexamples which show that Lemma \ref{lemma11} does not in general hold if one of these two conditions is not satisfied. However, Lemma \ref{lemma11} admits the following generalization on higher genus curves: if all conditions of Lemma \ref{lemma11} are satisfied except, possibly, the condition on the genus, then the quotient of the isospectral variety by the action of the gauge group can be identified with a subset of the Jacobian of $C$. Note that, in the genus zero case, this is exactly Lemma \ref{lemma11}, since the Jacobian is a single point. Also note that if Condition 1 of Lemma \ref{lemma11} is violated, then the Jacobian should be replaced by the generalized Jacobian. 
\end{remark}

Now we are in a position to state the main result of the paper.
\begin{theorem}\label{thm1}
Consider a system which admits a Lax representation with spectral parameter such that its phase space $\mathcal L$ lies in $\widetilde{\mathcal P}_{m,n}(B, \tau, \sigma)$. Assume that 
\begin{enumerate}
\item $L_\lambda^0 \in \mathcal L$ satisfies conditions of Lemma \ref{lemma11};
\item the intersection of the gauge group orbit of $L_\lambda^0$ with $\mathcal L$ is a discrete set.
\end{enumerate} Then $L_\lambda^0$ is a Lyapunov stable fixed point.
\end{theorem}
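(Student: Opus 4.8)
The plan is to deduce the theorem directly from Proposition \ref{LaxStab}, so that the entire task reduces to verifying a single isolation statement: that $L_\lambda^0$ is an isolated point of the isospectral variety taken \emph{inside the phase space}, namely of
$$
S(L_\lambda^0) = \{ L_\lambda \in \mathcal L \mid C(L_\lambda) = C(L_\lambda^0) \}.
$$
Once this is established, Proposition \ref{LaxStab} (which itself rests on Proposition \ref{stabProp}) immediately yields that $L_\lambda^0$ is a Lyapunov stable fixed point. Thus the real content of the argument is simply an explicit description of the set $S(L_\lambda^0)$.

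First I would show that $S(L_\lambda^0)$ is nothing but the intersection of a single gauge-group orbit with $\mathcal L$. Write $G \cdot L_\lambda^0$ for the gauge-group orbit of $L_\lambda^0$. Since $\mathcal L \subset \widetilde{\mathcal P}_{m,n}(B, \tau, \sigma)$ and $L_\lambda^0$ satisfies the hypotheses of Lemma \ref{lemma11} by assumption~1, any $L_\lambda \in \mathcal L$ with $C(L_\lambda) = C(L_\lambda^0)$ automatically lies in $\widetilde{\mathcal P}_{m,n}(B, \tau, \sigma)$ and shares the spectral curve $C = C(L_\lambda^0)$; Lemma \ref{lemma11} then produces $Q \in G$ with $L_\lambda = Q^{-1} L_\lambda^0 Q$. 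Conversely, every point of $G \cdot L_\lambda^0$ has spectral curve $C$, because $C(Q^{-1} L_\lambda^0 Q) = C(L_\lambda^0)$. Combining the two inclusions gives the identity
$$
S(L_\lambda^0) = \left( G \cdot L_\lambda^0 \right) \cap \mathcal L.
$$

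Finally I would invoke hypothesis~2 of the theorem, which says precisely that $\left( G \cdot L_\lambda^0 \right) \cap \mathcal L$ is a discrete set. By the identity just obtained, $S(L_\lambda^0)$ is therefore discrete, so $L_\lambda^0$ is isolated in $S(L_\lambda^0)$, and Proposition \ref{LaxStab} completes the proof.

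The structural steps above are short, and the genuine work is concentrated in Lemma \ref{lemma11}: it is the genus-zero and reality hypotheses (Conditions 1 and 2 of that lemma) that collapse the usually high-dimensional isospectral variety onto a single gauge orbit. The one point that must be handled with care in the present deduction is that Proposition \ref{LaxStab} demands isolation \emph{within the phase space} $\mathcal L$, not within the full space $\mathcal P_{m,n}(B)$; Lemma \ref{lemma11} controls the latter as a single orbit, and it is exactly hypothesis~2 that converts this into discreteness after intersecting with the possibly lower-dimensional submanifold $\mathcal L$. Accordingly, the main obstacle is not in this argument but is displaced onto the two verifications required in any concrete application: that the spectral curve of $L_\lambda^0$ meets the conditions of Lemma \ref{lemma11}, and that the gauge orbit meets $\mathcal L$ in a discrete set.
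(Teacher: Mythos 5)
Your proposal is correct and follows essentially the same route as the paper: identify the isospectral set inside $\mathcal L$ with $(G \cdot L_\lambda^0) \cap \mathcal L$ via Lemma \ref{lemma11} and gauge-invariance of the spectral curve, then use hypothesis 2 for discreteness and conclude by Proposition \ref{LaxStab}. Your write-up is merely more explicit about the two inclusions and about where the isolation is required, which the paper leaves terse.
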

\newpage
\begin{proof}
Lemma \ref{lemma11} implies that the isospectral variety $$S(L_\lambda^0) = \{ L_\lambda \in \widetilde{\mathcal P}_{m,n}(B, \tau, \sigma) \mid C(L_\lambda) = C(L_\lambda^0)\}$$ coincides with $G$-orbit of $L_\lambda^0$, so 
$$ \{ L_\lambda \in \mathcal L \mid C(L_\lambda) = C(L_\lambda^0)\} = S(L_\lambda^0) \cap \mathcal L$$
is a discrete subset of $\mathcal L$. Now apply Proposition \ref{LaxStab}.
\end{proof}
\begin{remark}
\label{common}
Note that the Lax flow \eqref{LaxRepr2} can be, as a rule, included into an hierarchy of commuting Lax flows
\begin{align}
\label{higherLax}
 \diffXp{t}  L_\lambda = [L_\lambda, A_\lambda^i(L_\lambda)], \quad i =1,\dots, k.
\end{align}
Theorem \ref{thm1} is applicable for all these flows. In particular, if $L_\lambda^0 \in \mathcal L$ satisfies the assumptions of the theorem, then it is a common equilibrium point for all flows \eqref{higherLax}. Note that if $L_\lambda^0 \in \mathcal L$ is an equilibrium point of  \eqref{LaxRepr2}, but is not a common equilibrium point, then $L_\lambda^0$ is automatically not isolated in the moment map level set, and under some additional assumptions, it is unstable (see Remark \ref{unstab}).
\end{remark}
In Sections \ref{LSect}, \ref{MSect}, \ref{MRBSect} we consider three examples illustrating Theorem \ref{thm1}. In Section \ref{ceSect} we consider several counterexamples showing that assumptions of Lemma \ref{lemma11} or Theorem \ref{thm1} in general cannot be avoided (except those mentioned in Remark \ref{weak}).

\section{Example I: Lagrange top}\label{LSect}
The equations of the Lagrange top read
\begin{align*}
\begin{cases}
\dot M &= [M, \Omega] + [\Gamma, \chi], \\
\dot \Gamma &= [\Gamma, \Omega],
\end{cases}
\end{align*}
where $M, \Gamma \in \so(3, \R)$ are dynamical variables,  $\Omega \in \so(3,\R)$ is defined by the relation $M = \Omega J + J\Omega$,
\begin{align*}
\chi = \left(\begin{array}{ccc}0 & 0 & 0 \\0 & 0 & 1 \\0 & -1 & 0\end{array}\right),  \quad  J = \mathrm{diag}(a,b,b)
\end{align*}
where $a,b \in \R$ are the inertia moments of the top. These equations can be rewritten as a single Lax equation with parameter:
\begin{align*}
\diffXp{t}(\lambda^2(a+b)\chi + \lambda M + \Gamma) = [\lambda^2(a+b)\chi + \lambda M + \Gamma, \lambda \chi + \Omega].
\end{align*}
 See \cite{Adler, ratiu1982lagrange, gavrilov} for a detailed discussion of this system and underlying algebraic geometry.\par\smallskip
Note that the Lax matrix $L_\lambda = \lambda^2(a+b)\chi + \lambda M + \Gamma$ satisfies  $L_{\tau(\lambda)} = - L_\lambda^*$ with $\tau$ given by $\lambda \mapsto \bar \lambda$, which allows the application of Theorem \ref{thm1}.
Consider the fixed point $(M_0, \Gamma_0)$ where $\Gamma_0 = \chi$ and $M_0 =  m\chi$. From the point of view of mechanics, this point corresponds to a rotation about the vertical axis, or the so-called sleeping top. It is a classical result that the sleeping top with $m^2 \geq 4(a+b)$ is stable. Let us recover this result by means of the spectral curve. The equation of the spectral curve associated with $L_\lambda^0 =\lambda^2(a+b)\chi + \lambda M_0 + \Gamma_0$ is
\begin{align*}
P(\lambda, \mu) = -\mu(\mu + \mathfrak{i}((a+b)\lambda^2 + m\lambda+1)^2)(\mu - \mathfrak{i}((a+b)\lambda^2 + m\lambda+1)^2) = 0.
\end{align*}
This curve consists of three irreducible components $C_1, C_2, C_3$ of genus $0$. Since the involution $\hat \tau$ is $(\lambda, \mu) \mapsto (\bar \lambda, -\bar \mu)$, we sketch the spectral curve in the axes $\lambda, \mu \mathfrak{i}$. 
\begin{figure}[t]
{\begin{picture}(500,120)
\put(50,15){
\qbezier(10,80)(50,-10)(90,80)
\qbezier(10,20)(50,110)(90,20)
\put(0,50){\line(1,0){100}}
\put(0,-15){(a)\, $m^2 - 4(a+b) > 0$}
}
\put(180,15){
\qbezier(10,95)(50,5)(90,95)
\qbezier(10,5)(50,95)(90,5)
\put(0,50){\line(1,0){100}}
\put(0,-15){(b)\, $m^2 - 4(a+b) = 0$}
}
\put(310,15){
\qbezier(10,100)(50,10)(90,100)
\qbezier(10,0)(50,90)(90,0)
\put(0,50){\line(1,0){100}}
\put(0,-15){(c)\, $m^2 - 4(a+b) < 0$}
}

\end{picture}}
\caption{Spectral curves for the sleeping top.}\label{LagrangeCurve}
\end{figure}
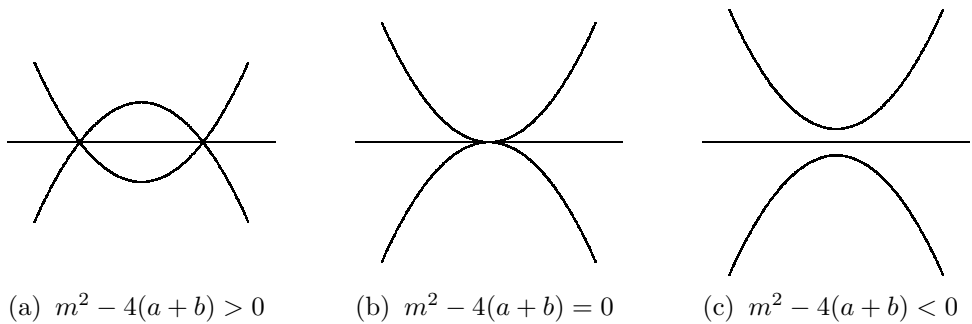Three sketches in Figure \ref{LagrangeCurve} correspond to the cases $D = m^2 - 4(a+b) > 0$, $D = 0$, and $D < 0$.
We conclude that the conditions of Lemma \ref{lemma11} hold if and only if $D \geq 0$, i.e. if the rotation is sufficiently fast.\par Now, to prove that the sleeping top with $D \geq 0$ is stable, it suffices to find the intersection of the gauge group orbit with the phase space. 
 Let $Q \in G$, and let
$$
L_\lambda = \lambda^2(a+b)\chi + \lambda M + \Gamma = Q( \lambda^2(a+b)\chi + \lambda M_0 + \Gamma_0)  Q^{-1}.
$$
We have $ Q\chi Q^{-1} = \chi$, so $M =  QM_0 Q^{-1} = m Q\chi Q^{-1} = m\chi = M_0$. Analogously, $\Gamma=\Gamma_0$, so the $G$-orbit of $ L_\lambda^0$ is one point, and the equilibrium $(M_0, \Gamma_0)$ is stable.\par\smallskip


The condition $D \geq 0$ is actually necessary and sufficient for stability \cite{Arnold}. The case when $D=0$ is more complicated compared to $D>0$. It corresponds to the so-called \textit{supercritical Hamiltonian Hopf bifurcation} \cite{Cushman}. 
%
\section{Example II: Mischenko-Fomenko tops on compact Lie algebras}\label{MSect}
Let $\g$ be a compact simple Lie algebra. The equations of the Mischenko-Fomenko top \cite{MF} are
\begin{align}\label{MFT}
\dot X = [X,\phi(X)]
\end{align}
where $X \in \g$, and $\phi\colon \g \to \g$ is a linear operator satisfying
\begin{align}\label{sectOp}
[\phi(X), A] = [X,B]
\end{align}
for some regular $A \in \g$, and $B \in \mathfrak{C}(a)$ where $\mathfrak C(a) = \{X\in \g \mid [X,A] = 0\}$ is the centralizer of $A$. Operators satisfying (\ref{sectOp}) are called \textit{sectional operators}. The Lax representation with parameter reads\begin{align*}
\diffXp{t} (X+\lambda A) = [X+\lambda A,\phi(X) - \lambda B].
\end{align*} 
Note that the Lax matrix $L_\lambda = X + \lambda A$ is a $\g$-valued polynomial. To obtain a matrix-valued polynomial, we pass to any unitary representation $\rho \colon \g \to \un(n)$. The Lax matrix so obtained satisfies  $L_{\tau(\lambda)} = - L_\lambda^*$ with $\tau$ given by $\lambda \mapsto \bar \lambda$.
\begin{statement}\label{MFFixed}
Let $X_0 \in \mathfrak{C}(A)$. Then $X_0$ is a fixed point of (\ref{MFT}). 
\end{statement}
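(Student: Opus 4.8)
The plan is to verify directly that the right-hand side of \eqref{MFT} vanishes at $X_0$, i.e.\ that $[X_0, \phi(X_0)] = 0$. The whole argument rests on a single structural fact: since $A \in \g$ is regular and $\g$ is compact simple, its centralizer $\mathfrak{C}(A)$ is a Cartan subalgebra of $\g$, and in the compact case every Cartan subalgebra is abelian. Thus $\mathfrak{C}(A)$ is a maximal abelian subalgebra, and in particular any two of its elements commute.

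With this in hand, the computation is short. First I would observe that both $X_0$ and $B$ lie in $\mathfrak{C}(A)$ --- the former by hypothesis, the latter by the defining condition on the sectional operator --- so that $[X_0, B] = 0$ by abelianness. Substituting $X = X_0$ into the sectional operator relation \eqref{sectOp} then gives
$$[\phi(X_0), A] = [X_0, B] = 0,$$
which says precisely that $\phi(X_0) \in \mathfrak{C}(A)$ as well. In other words, the sectional operator $\phi$ maps the centralizer of $A$ into itself.

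Finally, since $X_0$ and $\phi(X_0)$ now both belong to the abelian subalgebra $\mathfrak{C}(A)$, their bracket vanishes, $[X_0, \phi(X_0)] = 0$. Hence $\dot X = [X, \phi(X)]$ evaluates to zero at $X = X_0$, and $X_0$ is a fixed point of \eqref{MFT}.

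I do not expect a genuine obstacle here; the only input beyond bookkeeping is the fact that the centralizer of a regular element of a compact simple Lie algebra is abelian, which is standard. If one wished to avoid invoking regularity directly, the argument goes through as long as $\mathfrak{C}(A)$ is abelian, and this abelianness is equivalent to $A$ being regular --- so regularity of $A$ is exactly the hypothesis that makes the proof work.
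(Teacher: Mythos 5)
Your proof is correct and follows exactly the same route as the paper's: regularity of $A$ gives that $\mathfrak{C}(A)$ is abelian, hence $[X_0,B]=0$, then the sectional operator relation \eqref{sectOp} forces $\phi(X_0)\in\mathfrak{C}(A)$, and abelianness kills the bracket $[X_0,\phi(X_0)]$. The extra remarks about Cartan subalgebras and the equivalence of abelianness with regularity are fine but not needed; the core argument matches the paper step for step.
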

\begin{proof}
Let $X_0 \in \mathfrak{C}(A)$. Since $A$ is regular, $\mathfrak{C}(A)$ is Abelian, and since $B \in \mathfrak{C}(A)$, we have $[X_0,B] = 0$. Using (\ref{sectOp}), we conclude that $\phi(X_0) \in \mathfrak{C}(A)$, so $[X_0, \phi(X_0)] = 0$, q.e.d.
\end{proof}
Let us use Theorem \ref{thm1} to prove that all these equilibria are stable. Note that though $A$ is regular, the spectrum of its image under the representation $\rho$ may be not simple. For example, a matrix $A \in \so(2n, \R)$ with two zero eigenvalues is a regular element. However, it is easy to see that if $\g$ is of type $A_n,B_n,C_n,$ or $G_2$, and $\rho$ is the representation {of minimal dimension}, then regularity of $A$ does imply simplicity of the spectrum of $\rho(A)$. See Konyaev \cite{Konyaev} for details.\par
So, let $A$ and $X_0$ be represented by $n\times n$ matrices. Since $X_0 \in \mathfrak{C}(A)$, the matrices $X_0$ and $A$ are simultaneously diagonalizable. The equation of the spectral curve is
\begin{align*}
P(\lambda, \mu) = \prod_{i=1}^n (x_i + \lambda a_i  - \mu) = 0
\end{align*}
where $\{a_i\} \subset \mathfrak{i}\R$ and $\{x_i\} \subset \mathfrak{i}\R$ are eigenvalues of $A$ and $X_0$ respectively. So, the spectral curve is the union of distinct straight lines 
$
\mu = a_i \lambda + x_i.
$
It is clear that all conditions of Lemma \ref{lemma11} are satisfied, and the gauge group orbit of $L_\lambda^0 = X_0 + \lambda A$ consists of one point. Therefore, $X_0$ is stable.\par
Note that it is possible to generalize Lemma \ref{lemma11} in such a way that $D_n,E_6,E_7,E_8,$ and $F_4$ can also be included. The stability conclusion remains true for these Lie algebras as well. This can also be proved using the technique of \cite{JGP}.
 \begin{remark}
We note that the equation \eqref{MFT} is completely integrable can be included into an hierarchy of commuting flows. Equilibrium points described in Proposition \ref{MFFixed} are exactly those which are common for all these flows, see Brailov \cite{Brailov} and Bolsinov and Oshemkov \cite{biham}.
The equation  \eqref{MFT} may have other equilibria, however their stability cannot be studied by the method of the present paper (see Remark  \ref{common}). Nevertheless, if we are able to prove that \eqref{MFT} is a non-resonant system, then we can assert that these equilibria are unstable (Remark \ref{common}). Non-resonance of  \eqref{MFT} for generic $B$ can be deduced from the results of Rybnikov \cite{Rybnikov}.
 \end{remark}
 \section{Example III: Multidimensional rigid body}\label{MRBSect}
 The equations of motion of a torque-free multidimensional rigid body, also known as the Euler-Manakov top, are
 \begin{align}\label{MRB}
 \dot M = [M,\Omega],
 \end{align}
 $M \in \so(n, \R)$ is a dynamical variable, and $\Omega$ is found from the equation $M = J\Omega + \Omega J$ where $J$ is a fixed symmetric matrix.\par
 
 This example is the most interesting from the point of view of stability. A complete solution of the stability problem is only known for the three-dimensional body. In three dimensions, the equilibria of (\ref{MRB}) are rotations about principal axes, and it is well known that the rotation about the short and the long axis is stable, while the rotation about the middle axis is unstable. For a detailed discussion of the stability problem for the multidimensional rigid body, see \cite{nlin}.
In this section we show that the solution of this problem by means of the spectral curve is almost straightforward, at least for the so-called \textit{regular} equilibria which are defined below (see also Remark \ref{exotic}).\par
 Let $M_0$ be an equilibrium point of (\ref{MRB}). Then $M_0$ is called regular \cite{NRE} if it is possible to bring $J$ and $M_0$ to the canonical form simultaneously, i.e. if there exists a basis where $J$ is diagonal and $M_0$ takes the form
 	\begin{align*}
M_0= \left(\begin{array}{ccccccc}0 & m_1 &  &  &  &  &    \\-m_1 & 0 &  &  &  &  &    \\ &  & \ddots &  &  &  &    \\ &  &  & 0 & m_l &  &    \\ &  &  & -m_l & 0 &  &    \\ &  &  &  &  & 0 &    \\ &  &  &  &  &  & \ddots  \end{array}\right).
\end{align*}
  Regular equilibria are a natural multidimensional generalization of rotations about principal axes. A complete solution of the stability problem for regular equilibria under the condition that $J$ is generic is given below. \par
 A Lax representation with parameter for the system \eqref{MRB} was found by Manakov \cite{Manakov}. It reads:
 \begin{align}
 \label{laxMRB}
 \diffXp{t}(M+\lambda J^2) = [M+\lambda J^2, \Omega + \lambda J].
 \end{align}
The involution $\tau$ is $\lambda \mapsto -\bar \lambda$. \par
Let $M_0$ be a regular equilibrium point, and let $J^2 = \mathrm{diag}(a_1, \dots, a_n)$
where the numbers $a_1, \dots, a_n$ are all distinct.
Then the spectral curve is given by
\begin{align*}
P(\lambda,\mu) = \prod_{i=1}^{l}\left(m_i^2 + ( a_{2i-1}\lambda- \mu)(a_{2i}\lambda - \mu)\right)\prod_{i=2l+1}^n(a_i\lambda - \mu) = 0,
\end{align*}
thus it is a union of hyperbolas and straight lines.
To investigate when the conditions of Lemma \ref{lemma11} are satisfied, we make a change of variables \begin{align*}\begin{cases}x = \dfrac{\mu}{\lambda},\\ y = -\dfrac{1}{\lambda^2}\end{cases}\end{align*}and obtain the curves
\begin{align*}
\begin{aligned}
y = \frac{ (x - a_{2i-1})(x -a_{2i}) }{m_i^2}, \quad &i=1,\dots,l.\\
x = a_i, \quad i = 2l+1&, \dots, n.
\end{aligned}
\end{align*}

The union of these curves is called the \textit{parabolic diagram} associated with $M_0$. 
It is obvious that the only condition of Lemma \ref{lemma11} which needs to be checked is that all singular points of the spectral curve belong to the set $\mathrm{Fix}\, \hat \tau$ where $\hat \tau$ is given by $(\lambda, \mu) \to (-\bar \lambda, -\bar \mu)$. In terms of the parabolic diagram, this condition means that all intersections are real and belong to the set $\{ x \in \R, y > 0\} \cup \{y = \infty\}$. To show that this condition implies stability, we need to describe the gauge group orbit of $L_\lambda^0 = M_0 + \lambda J^2$.
The gauge group consists of diagonal unitary matrices, so the $G$-orbit of $M_0$ in this case is not discrete. However, its intersection with the phase space, i.e. with the set of real skew-symmetric matrices, is finite and consists of matrices
	\begin{align*}
\left(\begin{array}{ccccccc}0 & \pm m_1 &  &  &  &  &    \\ \mp m_1 & 0 &  &  &  &  &    \\ &  & \ddots &  &  &  &    \\ &  &  & 0 & \pm m_l &  &    \\ &  &  & \mp m_l & 0 &  &    \\ &  &  &  &  & 0 &    \\ &  &  &  &  &  & \ddots  \end{array}\right).
\end{align*}
We conclude that if all intersections on the parabolic diagram  are either real and belong to the upper half-plane, or infinite, then the equilibrium is stable. On the other hand, it is shown in \cite{nlin} that this condition is also a necessary condition for stability, which implies the following:
\begin{theorem}\label{thm3}
A regular equilibrium of the torque-free multidimensional rigid body is stable if and only if all singular points on the associated parabolic diagram are either real and belong to the upper half-plane, or infinite.
\end{theorem}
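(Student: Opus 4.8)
The plan is to treat the two implications separately, with the forward (``if'') direction assembling the machinery already set up in this section. Assume all singular points of the parabolic diagram are real and in the upper half-plane, or infinite. First I would record that the spectral curve is a union of the lines $\mu = a_i\lambda + x_i$ and the conics $m_i^2 + (a_{2i-1}\lambda - \mu)(a_{2i}\lambda - \mu) = 0$, each of which is a smooth rational curve; hence Condition 2 (genus zero and smoothness) and Condition 3 (intersections of order at most two) of Lemma \ref{lemma11} hold automatically, while the leading coefficient $B = J^2 = \mathrm{diag}(a_1, \dots, a_n)$ has simple spectrum because the $a_i$ are distinct. It remains to match the hypothesis of the theorem with Condition 1 of the lemma. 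For this I would compute that under $x = \mu/\lambda$, $y = -1/\lambda^2$ the involution $\hat\tau \colon (\lambda, \mu) \mapsto (-\bar\lambda, -\bar\mu)$ becomes complex conjugation $(x,y) \mapsto (\bar x, \bar y)$, and that the fixed-point set $\mathrm{Fix}\,\hat\tau$, consisting of points with $\lambda$ and $\mu$ purely imaginary, maps precisely onto $\{x \in \R, \, y > 0\} \cup \{y = \infty\}$ (the value $\lambda = 0$ accounting for $y = \infty$). Thus ``all intersections real and in the upper half-plane or infinite'' is exactly ``all singular points lie in $\mathrm{Fix}\,\hat\tau$'', so Lemma \ref{lemma11} applies.

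Next I would verify the remaining hypothesis of Theorem \ref{thm1}, namely that the gauge orbit of $L_\lambda^0 = M_0 + \lambda J^2$ meets the phase space $\mathcal L$ of real skew-symmetric matrices in a discrete set. This is the computation already carried out above: the gauge group is the group of diagonal unitary matrices modulo scalars, and conjugating $M_0$ by such a matrix yields a real skew-symmetric matrix only when each block entry $m_i$ is sent to $\pm m_i$, giving a finite list. With both hypotheses of Theorem \ref{thm1} in place, $M_0$ is Lyapunov stable, settling sufficiency.

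For the converse (``only if'') direction I would invoke the instability result of \cite{nlin}: if some singular point of the parabolic diagram fails to be real and in the upper half-plane or infinite, then $M_0$ is unstable. I expect this to be the genuine obstacle. The spectral-curve method developed here produces only a \emph{sufficient} criterion for stability, so establishing that the criterion is also necessary lies outside its reach and requires a different mechanism --- for instance, constructing isospectral deformations of $L_\lambda^0$ inside $\mathcal L$ that accumulate at $M_0$ (so that $M_0$ fails to be isolated in its moment-map level set, cf.\ Remark \ref{unstab}), or a direct linearized analysis exhibiting an escaping direction. Since this instability analysis is precisely what is carried out in \cite{nlin}, I would cite it; combining the two directions then yields the stated equivalence.
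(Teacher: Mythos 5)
Your proposal is correct and follows essentially the same route as the paper: sufficiency is obtained by checking the hypotheses of Lemma \ref{lemma11} (rational smooth components, simple spectrum of $J^2$, the parabolic-diagram condition being exactly Condition 1 under the coordinate change, intersections at worst of order two) and the discreteness hypothesis of Theorem \ref{thm1} via the finite intersection of the diagonal-unitary gauge orbit with the real skew-symmetric matrices, while necessity is imported from \cite{nlin}. The paper handles the converse in precisely the same way, citing \cite{nlin} rather than deriving instability from the spectral-curve machinery, so your reading of where the method's reach ends agrees with the paper's.
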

A weaker version of this result was proved in \cite{nlin} by means of the bi-Hamiltonian approach. It included an additional requirement that there are no tangency points on the parabolic diagram (i.e. all intersections are of order $1$). 
\par
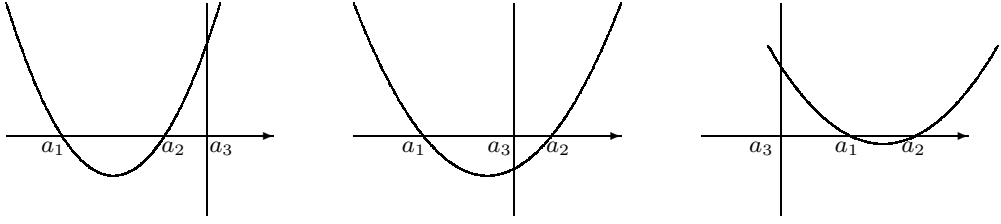
\begin{figure}[t]
{\begin{picture}(500,60)
\put(45,0){
\qbezier(0,80)(40,-50)(80,80)
\put(0,30){\vector(1,0){100}}
\put(75,0){\line(0,1){80}}
\put(13,24){\small{$a_1$}}
\put(58,24){\small{$a_2$}}
\put(76,24){\small{$a_3$}}
}
\put(175,0){
\qbezier(0,80)(50,-50)(100,80)
\put(0,30){\vector(1,0){100}}
\put(60,0){\line(0,1){80}}
\put(18,24){\small{$a_1$}}
\put(50,24){\small{$a_3$}}
\put(72,24){\small{$a_2$}}
}
\put(305,0){
\qbezier(25,64)(68,-10)(111,64)
\put(0,30){\vector(1,0){100}}
\put(30,0){\line(0,1){80}}
\put(18,24){\small{$a_3$}}
\put(50,24){\small{$a_1$}}
\put(75,24){\small{$a_2$}}
}

\end{picture}}
\caption{Parabolic diagrams for the three-dimensional body. 
}\label{3dpd}
\end{figure}
\begin{figure}[t]
{\begin{picture}(500,100)

\put(20,10){
\qbezier(30,80)(75,-100)(120,80)
\qbezier(20,80)(80,-30)(140,80)
\put(10,30){\vector(1,0){120}}
\put(37,24){\small{$a_1$}}
\put(63,32){\small{$a_3$}}
\put(90,32){\small{$a_4$}}
\put(105,24){\small{$a_2$}}
}

\put(150,10){
\qbezier(30,80)(75,-100)(120,80)
\qbezier(46,80)(83,-60)(120,80)
\put(10,30){\vector(1,0){120}}
\put(37,24){\small{$a_1$}}
\put(63,32){\small{$a_3$}}
\put(94,32){\small{$a_4$}}
\put(105,24){\small{$a_2$}}

}
\put(280,10){
\qbezier(30,80)(75,-100)(120,80)
\qbezier(52,80)(83,-80)(114,80)
\put(10,30){\vector(1,0){120}}
\put(37,24){\small{$a_1$}}
\put(65,32){\small{$a_3$}}
\put(93,32){\small{$a_4$}}
\put(105,24){\small{$a_2$}}
}

\end{picture}}
\caption{Stability loss under Hamiltonian Hopf bifurcation for the four-dimensional rigid body.}\label{4dpd}
\end{figure}
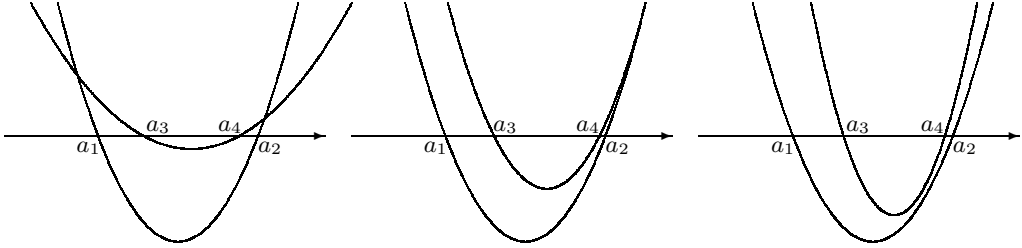
Parabolic diagrams for the three-dimensional body are depicted in Figure \ref{3dpd}. Classical results on stability are immediately recovered. Much more examples of parabolic diagrams can be found in \cite{nlin}.
\begin{remark}
Parabolic diagrams were introduced in \cite{JGP} to describe the spectrum of the bi-Hamiltonian structure related to the equation (\ref{MRB}). In the context of spectral curves, the parabolic diagram is the quotient of the spectral curve by the holomorphic involution $(\lambda, \mu) \to (-\lambda, -\mu)$. The relation between the bi-Hamiltonian structure and the quotient of the spectral curve remains unclear.\par\smallskip
\end{remark}
\begin{remark}
	The stability problem for the four-dimensional rigid body was solved almost completely in \cite{Marshall, Ratiu}. However, there is one case in which the instability conclusions of these papers seem to be incorrect. This is case IIIb of (3.13)  in Feh{\'e}r and Marshall \cite{Marshall}, and case V in Theorem 5.3 of Birtea et al. \cite{Ratiu}. It is the case with a tangency point in the upper half-plane depicted in the second diagram in Figure \ref{4dpd}. Theorem \ref{thm3} implies that this equilibrium is stable. This point lies on the boundary of the set of stable equilibria and corresponds to the Hamiltonian Hopf bifurcation (see Section \ref{LSect}). As the ratio $m_1^2 : m_2^2$ grows, a stable regime (first diagram in Figure \ref{4dpd}) is being replaced by an unstable regime (third diagram in Figure \ref{4dpd}). Note that this bifurcation is only possible if $[a_1,a_2] \supset [a_3,a_4]$, or $[a_1,a_2] \subset [a_3,a_4]$. If $[a_1,a_2] \cap [a_3,a_4]$ is empty, then the rotation is stable. And if it is a proper subset of both $[a_1,a_2]$ and $[a_3,a_4]$, then the rotation is unstable.

\end{remark}
\begin{remark}
\label{exotic}
There also exist equilibria of the equation (\ref{MRB}) which do not satisfy the regularity condition - the so-called \textit{exotic equilibria} \cite{NRE}.  The difference between regular and exotic equilibria is the following. The equation \eqref{MRB} is completely integrable can be included into an hierarchy of commuting flows.
Regular equilibria are those which are common for all these flows \cite{biham}, while exotic equilibria are fixed points of the particular system  \eqref{MRB}. Therefore, the technique of the present paper cannot be applied to exotic equilibria (see Remark  \ref{common}). However, if we are able to prove that \eqref{MRB} is non-resonant, then we can assert that all exotic equilibria are unstable (Remark \ref{common}). Non-resonance was proved in \cite{PhD}, however some technical details in the proof are still missing. A revised proof will be published elsewhere.


\end{remark}

\section{Several counterexamples}\label{ceSect}
\begin{cexample}\label{ce1}
Let us show that Condition 1 of Lemma \ref{lemma11} cannot be avoided.
Let $\tau$ be the involution $\lambda \to -\bar \lambda$, and let $\sigma = -1$. Let $J^2 = \mathrm{diag}(a_1, a_2, a_3)$ where $a_1 < a_2 <a_3$ are real, positive, and distinct.
Then the space $\widetilde{\mathcal P}_{1,3}(J^2, \tau, \sigma)$ consists of Lax matrices of the form $M+\lambda J^2$ where $M \in \un(3)$ is skew-Hermitian. Let
$$
M_0 = \left(\begin{array}{ccc}0 & 0 & 1 \\0 & 0 & 0 \\-1 & 0 & 0\end{array}\right)
$$
Then the spectral curve corresponding to $M_0 + \lambda J^2$ is given by
\begin{align*}
P(\lambda,\mu) =\left(1 + ( a_{1}\lambda- \mu)(a_3\lambda - \mu)\right)(a_2\lambda - \mu) = 0.
\end{align*}
This curve has a singular point $\lambda_0 = ((a_3 - a_2)(a_2-a_1))^{-1/2}, \mu_0 = a_2\lambda_0$ which does not belong to $\mathrm{Fix}\,\widehat \tau$, while all other assumptions of Lemma \ref{lemma11} are satisfied. Let us show that the statement of Lemma \ref{lemma11} does not hold. The gauge group orbit of $M_0+\lambda J^2$ is
\begin{align*}
\left\{\left.  \left(\begin{array}{ccc}0 & 0 & y \\0 & 0 & 0 \\- \bar y & 0 & 0\end{array}\right)+\lambda J^2 \,\right\vert\, |y| = 1 \right\}.
\end{align*}
The isospectral variety is
\begin{align*}
\left\{\left.\left(\begin{array}{ccc}0 & z & y \\-\bar z & 0 & x \\-\bar y & -\bar x & 0\end{array}\right)+\lambda J^2  \,\right\vert\, 
\begin{aligned}
a_1|x|^2 + a_2|&y|^2 + a_3|z|^2 = a_2, \\ |x|^2 + |&y|^2 + |z|^2 = 1,  \\ x&\bar y z \in \R \end{aligned}\right\}.
\end{align*}
It is clear that these two sets are distinct. In particular, the real part of the isospectral variety is the intersection of an ellipsoid with a concentric sphere radius equal to the middle semi-axis of the ellipsoid, i.e. two circles intersecting at two points, while the real part of the gauge group orbit is two points $(0,\pm 1, 0)$.\par
The statement of Theorem \ref{thm1} does not hold as well. Let us take $$\mathcal L = \{M+\lambda J^2 \mid M\in \so(3, \R)\}$$ and consider the Lax equation \eqref{laxMRB}, which is the Euler equation of the free three-dimensional rigid body. Then $M_0$ is the rotation about middle axis of inertia, which is known to be unstable.
\end{cexample}
\begin{cexample} Let us show that the condition on the genus is also important. Consider $\widetilde{\mathcal P}_{1,3}(J^2, \tau, \sigma)$ from the previous example, and let
$$
M_0 = \left(\begin{array}{ccc}0 & z_0 & y_0 \\- z_0 & 0 & x_0 \\- y_0 & - x_0 & 0\end{array}\right)
$$
where $x_0,y_0,z_0$ are non-zero real numbers. All conditions of Lemma \ref{lemma11} are satisfied except the condition on the genus: the spectral curve is non-singular and has genus $1$. As in the previous example, take $$\mathcal L = \{M+\lambda J^2 \mid M\in \so(3, \R)\}$$ and consider the Lax equation \eqref{laxMRB}. Then, though the second condition of Theorem \ref{thm1} is satisfied, the conclusion of the theorem does not hold: $M_0$ is not a fixed point at all. Consequently, the statement of Lemma \ref{lemma11} does not hold as well. It is actually easy to see that the quotient of the isospectral variety of $M_0$ by the action of the gauge group is a circle.

\end{cexample}
\begin{cexample}
Finally, let us show that Condition 2 of Theorem 2 cannot be avoided as well. Consider $\widetilde{\mathcal P}_{1,3}(J^2, \tau, \sigma)$ from previous examples, and let $A = \mathrm{diag}(ai, bi, ci)$ be a diagonal skew-Hermitian matrix. Consider the equation
\begin{align}\label{gaugeFlow}
 \diffXp{t} (M+\lambda J^2) = [ M+\lambda J^2, A]
\end{align}
which defines a flow on  $\mathcal L =\widetilde{\mathcal P}_{1,3}(J^2, \tau, \sigma)$. Take $$
M_0 = \left(\begin{array}{ccc}0 & 1 & 0 \\-1 & 0 & 0 \\0 & 0 & 0\end{array}\right).
$$
Then Lemma \ref{lemma11} holds true for $M_0 + \lambda J^2$. However, Condition 2 of Theorem \ref{thm1} is not satisfied, and the statement of the theorem is not true: $M_0 + \lambda J^2$ is not a fixed point of \eqref{gaugeFlow}. Note that the flow defined by  \eqref{gaugeFlow} is just the action of a $1$-parametric subgroup of the gauge group.
\end{cexample}

\section{Proof of Lemma \ref{lemma11}}\label{proofSec}
We need to prove that it is possible to recover $L_\lambda$ uniquely from its spectral curve up to the action of the gauge group.
Let $\Gamma$ be the normalized compactification of the spectral curve $C$. Since $C$ is reducible, the Riemann surface $\Gamma$ is not connected, but has $k$ connected components $\Gamma_1, \dots, \Gamma_k$. The functions $\lambda, \mu$ are meromorphic on $\Gamma$ and satisfy the equation
$
\det( L_\lambda - \mu E) = 0.
$
Let also
 $\pi \colon \Gamma \to \CP^1 \times \CP^1$ be the mapping $\gamma\mapsto (\lambda(\gamma),\mu(\gamma))$. The image of $\pi$ is the closure of the curve $C$.
\begin{statement}
The meromorphic function $\lambda$ has exactly $n$ poles $\infty_1, \dots, \infty_n$ in $\Gamma$. 
\end{statement}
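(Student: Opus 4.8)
The plan is to count the poles of $\lambda$ by studying the branched cover $\lambda\colon \Gamma \to \CP^1$ in a neighbourhood of its fibre over $\lambda = \infty$, using the simple spectrum of the leading coefficient $B$ to show that this fibre consists of exactly $n$ distinct, unramified points.

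First I would record the elementary degree count. As a polynomial in $\mu$, the characteristic polynomial $P(\lambda,\mu) = \det(L_\lambda - \mu\E)$ has degree $n$ with leading coefficient $(-1)^n$, so over a generic value of $\lambda$ the curve $C$ has exactly $n$ points. Hence $\lambda\colon \Gamma \to \CP^1$ is a branched cover of degree $n$, and therefore $\lambda$ has exactly $n$ poles counted with multiplicity. It remains to show that these poles are pairwise distinct and each simple, i.e.\ that the fibre $\lambda^{-1}(\infty)$ consists of $n$ smooth points of $\Gamma$ at each of which $\lambda$ has a simple pole.

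To analyse the behaviour at infinity I would introduce the local coordinate $w = 1/\lambda$ and set $\tilde L(w) = \lambda^{-m} L_\lambda = B + w B_{m-1} + \dots + w^m B_0$, a matrix-valued function holomorphic at $w = 0$ with $\tilde L(0) = B$. Writing $\nu = \mu \lambda^{-m}$, the spectral equation becomes $\det(\tilde L(w) - \nu\E) = 0$, which is, up to the sign $(-1)^n$, a monic polynomial of degree $n$ in $\nu$ whose coefficients are holomorphic in $w$ near $0$. At $w=0$ it reduces to $\det(B - \nu\E)=0$; since $B$ has simple spectrum, this has $n$ distinct roots $\beta_1,\dots,\beta_n$. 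By the implicit function theorem (equivalently, Hensel's lemma over the ring of convergent power series in $w$) the polynomial factors as $\prod_{i=1}^n(\nu - \nu_i(w))$ with each $\nu_i$ holomorphic near $w=0$ and $\nu_i(0)=\beta_i$. Each factor $\nu = \nu_i(w)$ is a single-valued holomorphic branch, hence defines one smooth point $\infty_i$ of the curve over $w=0$, and the $n$ points are distinct because the $\beta_i$ are. On each such branch $w$ serves as a local coordinate, so $\lambda = 1/w$ has a simple pole at $\infty_i$.

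Finally I would observe that these $n$ branches exhaust the fibre over $\lambda=\infty$, their number being equal to the degree $n$ of the $\nu$-polynomial, so there is neither any further point nor any ramification; thus $\lambda^{-1}(\infty) = \{\infty_1,\dots,\infty_n\}$, each a simple pole, and together they account for all $n$ poles counted with multiplicity. The crux of the argument — and the only place where the hypothesis is used — is the factorization step: distinctness of the eigenvalues of $B$ is precisely what keeps the eigenvalue branches from colliding at infinity, so that the cover is unramified over $\infty$ and all poles are simple and distinct. Were the spectrum of $B$ not simple, the leading roots $\beta_i$ could coincide, forcing Puiseux branches with fractional exponents and hence ramification over $\lambda=\infty$, which would make the number of distinct poles strictly smaller than $n$.
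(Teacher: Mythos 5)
Your proof is correct and follows exactly the route the paper intends: the paper's entire proof is the one-line remark that the claim ``follows easily from the simplicity of the spectrum of $B_m$,'' and your argument (passing to $w = 1/\lambda$, $\nu = \mu\lambda^{-m}$, and splitting the spectral equation into $n$ holomorphic branches $\nu_i(w)$ with distinct values $\beta_i$ at $w=0$) is precisely the standard expansion of that remark. You have simply supplied the details the author omitted, correctly identifying where simplicity of the spectrum of $B$ enters.
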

\begin{proof}
This follows easily from the simplicity of the spectrum of $ B_m$.
\end{proof}
Let $\Gamma_{0} = \{\gamma \in \Gamma \mid \lambda(\gamma) < \infty, \,\dim \Ker ( L_{\lambda(\gamma)} - \mu(\gamma) \E) = 1\}.$
Consider the mapping $\psi\colon \Gamma_{0} \to \CP^{n-1}$ which maps $\gamma$ to the eigenvector of $ L_{\lambda(\gamma) }- \mu(\gamma) \E$. This mapping defines a holomorphic line bundle over $\Gamma_0$ which is called the \textit{eigenvector bundle}. However, we will consider $\psi$ as a mapping, not as a line bundle. 
The following statement is well-known (see e.g. Audin \cite{Audin}).
\begin{statement}\label{lemma1}
	The mapping $\psi$ can be uniquely extended to a holomorphic mapping $\psi \colon \Gamma  \to \CP^{n-1}$.
\end{statement}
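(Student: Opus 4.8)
The plan is to extend the eigenvector map $\psi$ across the finitely many points of $\Gamma$ where it is not yet defined, using the standard Hartogs/Riemann-type removable-singularity principle for meromorphic maps into projective space. First I would identify precisely where $\psi$ fails to be defined on $\Gamma \setminus \Gamma_0$: these are the points $\gamma$ with $\lambda(\gamma) = \infty$ (the poles $\infty_1,\dots,\infty_n$ of the preceding proposition), together with the points where $\dim \Ker(L_{\lambda(\gamma)} - \mu(\gamma)\E) > 1$. Since each irreducible component of $C$ is smooth and $\Gamma$ is its normalization, these bad points form a finite subset of the Riemann surface $\Gamma$. Away from them, $\psi$ is holomorphic by the implicit function theorem applied to the (locally constant rank $n-1$) family of matrices $L_{\lambda(\gamma)} - \mu(\gamma)\E$.

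Next I would carry out the local extension. Fix a bad point $\gamma_0$ and a local holomorphic coordinate $z$ on $\Gamma$ centered at it, so that near $\gamma_0$ the eigenvector equation $(L_{\lambda(z)} - \mu(z)\E)\,v = 0$ has a one-dimensional solution space for $z \neq 0$. The eigenvector can be chosen as a vector $v(z) \in \Complex^n$ whose entries are, after clearing denominators, holomorphic functions of $z$ obtained from cofactors of the matrix $L_{\lambda(z)} - \mu(z)\E$; these cofactors are polynomial in the matrix entries and hence holomorphic in $z$. Writing $v(z) = z^d \, w(z)$ where $d$ is the minimal order of vanishing among the entries, the rescaled vector $w(z)$ is holomorphic, nonvanishing at $z = 0$, and spans the same line in $\CP^{n-1}$ for $z \neq 0$. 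Thus $\psi(z) = [w(z)]$ defines a holomorphic map on a full neighborhood of $\gamma_0$ agreeing with the original $\psi$ off $\gamma_0$. The case $\lambda(\gamma_0) = \infty$ is handled the same way after passing to the coordinate $\lambda^{-1}$ and noting that the leading coefficient $B$ has simple spectrum, which controls the behavior of the cofactors at infinity.

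The key point making this work is that a holomorphic map from a punctured disk to the \emph{compact} manifold $\CP^{n-1}$, with a single puncture, always extends holomorphically across the puncture: this is the standard removable-singularity theorem for maps into projective space, which one proves exactly by the cofactor-clearing argument above, or abstractly since $\CP^{n-1}$ admits no nonconstant bounded holomorphic functions and the puncture has real codimension two. Uniqueness of the extension is then immediate, since two holomorphic maps from $\Gamma$ to $\CP^{n-1}$ agreeing on the dense open set $\Gamma_0$ must coincide by the identity principle.

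The main obstacle I expect is verifying that the chosen eigenvector, built from cofactors, genuinely spans the kernel and does not vanish identically after rescaling — i.e.\ that the cofactor matrix $\mathrm{adj}(L_{\lambda(z)} - \mu(z)\E)$ is not identically zero near $\gamma_0$. This is where smoothness of the components and the genus-zero structure are implicitly used: on a smooth component the characteristic polynomial has $\mu(z)$ as a simple root for generic $z$, so the matrix drops rank by exactly one and the adjugate has rank one, guaranteeing a nonzero cofactor column to serve as $v(z)$. At genuine multiple-kernel points one must check that after factoring out the common vanishing $z^d$ the surviving vector is nonzero, which follows because the adjugate cannot vanish to infinite order on a smooth curve. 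I would treat this rank bookkeeping as the technical heart of the argument and otherwise defer, as the statement itself does, to the standard treatment in Audin \cite{Audin}.
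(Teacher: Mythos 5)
Your core construction is exactly the paper's proof: represent the eigenvector near a bad point $\gamma_0$ by a row of the cofactor (adjugate) matrix of $L_{\lambda(z)} - \mu(z)\E$, note that its entries are holomorphic (meromorphic near $\lambda = \infty$) in the local coordinate $z$, factor out $z^d$ with $d$ the minimal vanishing order, and define the extension by the resulting nonvanishing vector; your final paragraph's concern about the adjugate not vanishing identically (rank drops by exactly one at generic points of a smooth component, so the adjugate has rank one there) is the same ``without loss of generality'' step the paper makes, and uniqueness via the identity principle matches the paper's ``uniqueness is obvious.'' So the proof you actually carry out is correct and is the paper's proof.

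However, what you call ``the key point making this work'' is a false statement, and you should delete it. A holomorphic map from a punctured disk to $\CP^{n-1}$ does \emph{not} always extend across the puncture: the map $z \mapsto (e^{1/z} : 1)$ into $\CP^1$ is holomorphic on the punctured disk and has an essential singularity, so it extends neither holomorphically nor meromorphically. Compactness of the target is no help --- the absence of nonconstant bounded holomorphic functions is a statement about functions defined \emph{on} $\CP^{n-1}$, not about maps into it --- and ``real codimension two'' of the puncture is not the hypothesis of any applicable extension theorem here: Riemann removability requires boundedness of a function, and Hartogs-type extension requires complex codimension at least two, whereas a puncture on a Riemann surface has complex codimension one. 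The extension in this Proposition works for a special reason, namely that $\psi$ is represented by meromorphic (indeed algebraic) functions, the cofactors, so its singularities are at worst poles, and these are exactly what the $z^d$-clearing argument removes. Since your second and fourth paragraphs carry out that argument in full, your proof stands once the spurious general principle is removed; but a reader who followed your ``abstract'' route instead of the cofactor route would be left with a wrong proof.
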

\begin{proof}
The complement $\Gamma \setminus \Gamma_0$ consists of the finite number of points $\gamma_1, \dots, \gamma_k$. Prove that $\psi$ can be extended to $\gamma_i$. Let $z$ be a local coordinate in the neighborhood of $\gamma_i$ such that $z(\gamma_i) = 0$. Without loss of generality, the first row of the comatrix of $ L_{\lambda(z)} - \mu(z)\E$ is non-zero and finite in the punctured neighborhood of $\gamma_i$. Denote this vector by $a(z) = (a^1(z), \dots, a^n(z))$. Clearly, 
\begin{align*}
\psi(z) = (a^1(z): \ldots: a^n(z))
\end{align*}
for $z \neq 0$. At the same time, there exists a positive or negative integer $m$ such that $a(z) = z^mb(z)$ where $b(z) = (b_1(z), \dots, b_n(z)) \neq 0$ at $z = 0$. So, $\psi$ can be extended to $\gamma_i$ by the formula
\begin{align*}
\psi(z) = (b^1(z): \ldots: b^n(z)).
\end{align*}
Uniqueness is obvious.
\end{proof}
\begin{statement}The vectors $\psi(\infty_1), \dots, \psi(\infty_n)$ are the eigenvectors of $ B_m$.\end{statement}
\begin{proof} Apply the continuity argument. \end{proof} 
 Let us normalize $\psi$ to obtain a meromorphic vector-function $h \colon \Gamma \to \Complex^n$ such that $h(\gamma) \neq 0$ for each $\gamma \in\Gamma$:
$$
 h^i = {\psi^i} \left({\sum_{i=1}^n \alpha_i \psi^i}\right)^{-1}.
$$
 The numbers $\alpha_1, \dots, \alpha_n$ should be chosen in such a way that the poles of $h$ are away from the ramification points of $\lambda$, as well from the points which project to singular points of the curve $C$.\par
 Count the number of poles of $h$.
 For generic $\lambda_0 \in \CP^1$, the set $\lambda^{-1}(\lambda_0) = \{ \gamma \in \Gamma \mid \lambda(\gamma) = \lambda\}$ contains exactly $n$ points $\gamma_1 ,\dots, \gamma_{n}$. Following Dubrovin et al. \cite{DKN}, consider the function
\begin{align*}
r(\lambda_0) =\left( \det(h(\gamma_1), \dots, h(\gamma_{n}))\right)^2.
\end{align*}
This expression does not depend on the numeration of points in $\lambda^{-1}(\lambda_0)$, so it is a rational function of $\lambda_0$. Clearly,
\begin{align}\label{divPoles}
(r)_\infty = 2\lambda\left((h)_\infty\right)
\end{align}
where $(f)_\infty$ denotes the divisor of poles of $f$. Therefore, we can count poles of $h$ by counting poles of $r$. To count poles of $r$, we count its zeros. Obviuosly, $r(\lambda_0)$ can only be zero if the spectrum of $L_{\lambda_0}$ is not simple. This may happen in two cases: either if $\lambda_0$ is a branch point of the function $\lambda$, or if the preimage $\lambda^{-1}(\lambda_0)$ contains singular points of $C$, i.e. those points which belong to at least two irreducible components. Below we count the number of zeros of $r$ corresponding to branch points and show that Condition 1 of Lemma \ref{lemma11} guarantees that singular points \textit{do not} contribute to zeros of $r$.
\begin{statement}\label{imm}
Let $\gamma \in \Gamma \setminus \{\infty_1, \dots, \infty_n\}$, and let $z$ be a local parameter near $\gamma$. Then either $\lambda'_z \neq 0$, or $\mu'_z \neq 0$.
\end{statement}
\begin{proof}
 The set $\Gamma \setminus \{\infty_1, \dots, \infty_n\}$ is just the disjoint union of irreducible components of $C$, so the statement follows from the smoothness of irreducible components.
\end{proof}
\begin{statement}\label{imm2}
Let $\gamma \in \Gamma \setminus \{\infty_1, \dots, \infty_n\}$, and let $z$ be a local parameter near $\gamma$. Assume that $\lambda'_z(\gamma) = 0$. Then the matrix
 $ L_{\lambda(\gamma)}$ has a non-trivial Jordan block with eigenvalue $\mu(\gamma)$ and generalized eigenvector $h'_z(\gamma)$. 
\end{statement}
\begin{proof}
Without loss of generality, assume that $z(\gamma) = 0$.
Differentiating
$$
( L_{\lambda(z)} - \mu(z)\E)h(z) = 0
$$
with respect to $z$ at $z = 0$, obtain
\begin{align*}
  ( L_{\lambda(0)} - \mu(0)\E){h'(0)} ={\mu'(0)}h(0).
\end{align*}
By Proposition \ref{imm}, the number $\mu'(0) $ is non-zero, which proves the proposition.
\end{proof}

\begin{statement}\label{lemma3}
Assume that 
$\gamma_{1}, \dots, \gamma_p \in \Gamma$ are distinct, and that $\pi(\gamma_{i}) = \pi(\gamma_{j})$ for any $i,j \in \{1, \dots, p\}$.
Then the vectors $h(\gamma_1), \dots, h(\gamma_p)$ are linearly independent.
\end{statement}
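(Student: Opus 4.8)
The plan is to reduce everything to a self-adjointness statement produced by Condition 1. First I would dispose of the trivial case $p=1$ and observe that when $p\ge 2$ the common image point $(\lambda_0,\mu_0)=\pi(\gamma_i)$ is necessarily a singular point of $C$, since $p$ distinct points of the normalization $\Gamma$ project to it. If $\lambda_0=\infty$ the points $\gamma_i$ lie among $\infty_1,\dots,\infty_n$, and the vectors $h(\gamma_i)$ are eigenvectors of $B$ for distinct eigenvalues, hence independent by simplicity of the spectrum of $B$; so I may assume $\lambda_0$ finite. Then Condition 1 of Lemma \ref{lemma11} gives $(\lambda_0,\mu_0)\in\mathrm{Fix}\,\widehat\tau$, which forces $\tau(\lambda_0)=\lambda_0$.

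Next I would exploit the self-adjointness this yields. Choosing $s\in\mathrm{S}^1$ with $s^2=\sigma$, the relation $L_{\tau(\lambda)}=\sigma L_\lambda^*$ shows that for every $\lambda$ in the fixed-point locus $R=\mathrm{Fix}\,\tau$ one has $L_\lambda=\sigma L_\lambda^*$, so $H(\lambda):=s^{-1}L_\lambda$ is Hermitian. Near $\lambda_0$ the set $R$ is a real-analytic arc; parametrizing it by a real parameter $t$ produces a real-analytic family $t\mapsto H(\lambda(t))$ of Hermitian matrices, and in particular $H(\lambda_0)$ is Hermitian, so $L_{\lambda_0}$ is diagonalizable. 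Via Proposition \ref{imm2} this semisimplicity already guarantees that $L_{\lambda_0}$ has no nontrivial Jordan block at $\mu_0$, hence $\lambda'_z\neq 0$ on each branch through $(\lambda_0,\mu_0)$, so every branch is an unramified graph $\mu=\mu_j(\lambda)$ over the $\lambda$-line.

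The core step is an application of Rellich's theorem on real-analytic perturbation of Hermitian matrices: along $R$ one may choose real-analytic eigenvalues $\kappa_1(t),\dots,\kappa_n(t)$ and a real-analytic \emph{orthonormal} frame of eigenvectors $v_1(t),\dots,v_n(t)$ of $H(\lambda(t))$. Complexifying, each $v_i$ extends to a holomorphic eigenvector of $L_\lambda$ with holomorphic eigenvalue $s\,\kappa_i(\lambda)$, and these eigenvalue functions are precisely the local branches of the spectral curve; those with $s\,\kappa_i(\lambda_0)=\mu_0$ are exactly the $p$ branches through $(\lambda_0,\mu_0)$. Since $L_\lambda$ has simple spectrum for generic $\lambda$ near $\lambda_0$, distinct points $\gamma_i$ correspond to distinct indices, and on each branch the eigenline is one-dimensional, so the normalized eigenvector $h$ agrees with the corresponding $v_i$ up to a nonvanishing holomorphic scalar. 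Evaluating at $\lambda_0$ gives $h(\gamma_i)=c_i\,v_i(\lambda_0)$ with $c_i\neq 0$, where the $v_i(\lambda_0)$ belong to an orthonormal basis; linear independence of $h(\gamma_1),\dots,h(\gamma_p)$ follows at once.

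The hard part will be the bookkeeping in the matching step: verifying that distinct $\gamma_i$ really correspond to distinct complexified Rellich eigen-branches and that $h$ and $v_i$ are proportional near $\lambda_0$, which rests on simplicity of the generic spectrum together with the non-vanishing of $h$ at $\gamma_i$ built into its construction. I would emphasize that this Hermitian/Rellich route handles tangential intersections automatically, so Condition 3 of Lemma \ref{lemma11} is not needed for this particular statement. A more elementary alternative — restricting to the $p$-dimensional spectral subspace and tracking the limiting eigenvectors as $\lambda\to\lambda_0$ — would instead require Condition 3 to exclude a degenerate Jordan structure in the leading term of the perturbation, and it is exactly that degeneration at a tangency point which I expect to be the delicate obstacle avoided by the present approach.
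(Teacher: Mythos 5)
Your proposal is correct, and it takes a genuinely different route from the paper's own proof. The paper argues by contradiction and works infinitesimally at $\lambda_0$: assuming $\sum_i c_i h_i(\lambda_0)=0$, it differentiates $(L_\lambda-\mu_i(\lambda)\E)h_i(\lambda)=0$ along each branch, uses normality of $L_{\lambda_0}$ (the same consequence of Condition 1 that you extract) to exclude Jordan blocks and obtain $\sum_i c_i(\mu_i'(\lambda_0))^k h_i(\lambda_0)=0$ for all $k$, and then runs a Vandermonde argument on the slopes $\mu_i'(\lambda_0)$; for groups of branches meeting tangentially (equal slopes) it differentiates a second time, kills the cross terms using the relations $L^*=\alpha L$ and $(L')^*=\beta L'$ coming from the real structure, and runs a second Vandermonde argument on the $\mu_i''(\lambda_0)$. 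That is exactly where Condition 3 of Lemma \ref{lemma11} enters the paper's proof: order-at-most-$2$ intersections guarantee the second derivatives within each tangential group are distinct. Your Rellich route uses the same real structure, but globally along the arc $\mathrm{Fix}\,\tau$ rather than only through derivatives at $\lambda_0$: analytic perturbation theory for the Hermitian family $s^{-1}L_{\lambda(t)}$ produces holomorphic eigenvalue and eigenvector branches that are \emph{orthonormal} at $\lambda_0$, and the matching of these with $h$ on each branch is sound — distinct points of the normalization give distinct branch germs because the simple spectrum of $B$ makes $P$ square-free, so the discriminant is not identically zero and the generic spectrum of $L_\lambda$ near $\lambda_0$ is simple, which is the fact your proportionality and injectivity claims rest on. What your approach buys: Condition 3 is not used at all, so you prove the proposition for tangencies of arbitrary order, sharpening the paper's result in the direction anticipated by Remark \ref{weak}, where the author says Condition 3 can "apparently be weakened"; you also treat the case $\lambda_0=\infty$ explicitly, which the paper's proof passes over. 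What it costs: the appeal to Rellich's theorem, a nontrivial classical input, whereas the paper's differentiation-plus-Vandermonde computation is elementary and self-contained.
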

\begin{proof}
Let $\lambda(\gamma_i) = \lambda_0$, and $\mu(\gamma_i) = \mu_0$. Then $(\lambda_0, \mu_0) \in C$ is a singular point. Condition 1 of Lemma \ref{lemma11} implies that $L_{\lambda_0}$ is a normal operator (recall that an operator in Hermitian space is called normal if it commutes with its adjoint), therefore it has no non-trivial Jordan blocks. Using Proposition \ref{imm} and Proposition \ref{imm2}, we conclude that $\lambda$ can be taken as a local parameter near $\gamma_1, \dots, \gamma_p$.\par
 Let $\mu = \mu_i(\lambda)$ and $h = h_i(\lambda)$ in the neighborhood of $\gamma_i$.
Assume that
\begin{align}\label{linDep2}
\sum_{i=1}^{p} c_ih_i(\lambda_0) = 0
\end{align}
and prove that $c_i = 0$ for each $i$.
Differentiating the equation
$$
( L_{\lambda} - \mu_i(\lambda)\E)h_i(\lambda) = 0
$$
with respect to $\lambda$ at $\lambda = \lambda_0$, we have
\begin{align}\label{qi}
Q_i = \left({ L'_\lambda}(\lambda_0) - {\mu'_i}(\lambda_0)\right)h_i(\lambda_0) + ( L_{\lambda_0} - \mu_0\E){h'_i(\lambda_0)}{} = 0.
\end{align}
Using (\ref{linDep2}), we obtain
\begin{align*}
 0 = \sum_{i=1}^{p}c_iQ_i = -\sum_{i=1}^{p} c_i{\mu'_i}{}(\lambda_0)h_{i}(\lambda_0) +  ( L_{\lambda_0} - \mu_0\E) \left(\sum_{i=1}^{p} c_i{h'_i(\lambda_0)}{}\right),
\end{align*}
so 
\begin{align}\label{JB}
( L_{\lambda_0} - \mu_0\E) \left(\sum_{i=1}^{p} c_i{h_i(\lambda_0)}{}\right) = \sum_{i=1}^{p} c_i{\mu'_i}{}(\lambda_0)h_{i}(\lambda_0).
\end{align}
Since $L_{\lambda_0}$ has no non-trivial Jordan blocks, (\ref{JB}) implies that
\begin{align*}
\sum_{i=1}^{p} c_i{\mu'_i}(\lambda_0)h_{i}(\lambda_0) = 0.
\end{align*}
Continuing in the same fashion, we obtain 

\begin{align}\label{manyDep2}
\sum_{i=1}^{p} c_i({\mu'_i}(\lambda_0))^kh_{i}(\lambda_0) = 0 \quad \forall\, k \in \mathbb N \cup \{0\}.
\end{align}
Let $I = \{1, \dots, p\}$. Define an equivalence relation on $I$ by
$
i \equiv j \Leftrightarrow \mu'_i(\lambda_0) = \mu'_j(\lambda_0).
$
Denote equivalence classes by $I_1, \dots, I_q$. Then (\ref{manyDep2}) implies that
\begin{align}\label{linDep3}
\sum_{i \in I_s} c_ih_i(\lambda_0) = 0, \quad \forall \, s = 1, \dots, q. 
\end{align}
If all pairwise intersections of irreducible components of $C$ are of order $1$, then each class $I_s$ consists of one element, and we are done.
Otherwise, differentiate
$$
( L_{\lambda} - \mu_i(\lambda)\E)h_i(\lambda) = 0
$$
two times with respect to $\lambda$ at $\lambda = \lambda_0$:
\begin{align*}
S_i = ( L''_{\lambda_0} - \mu''_i(\lambda_0)\E)h_i(\lambda_0) + 2( L'_{\lambda_0} - \mu'_i(\lambda_0)\E)h'_i(\lambda_0) + ( L_{\lambda_0} - \mu_0\E)h''_i(\lambda_0) = 0. 
\end{align*}
Fixing $s \in \{1, \dots, q\}$ and using (\ref{linDep3}), we get
\begin{align}\label{secDer}
\begin{aligned}
0 = \sum_{i \in I_s} c_i S_i = -&\sum_{i \in I_s} c_i \mu''_i(\lambda_0)h_i(\lambda_0) + 2( L'_{\lambda_0} - \mu'(\lambda_0)\E)\sum_{i \in I_s} c_ih'_i(\lambda_0) + \\ &+  ( L_{\lambda_0} - \mu_0\E) \sum_{i \in I_s} c_i h''_i(\lambda_0)
\end{aligned}
\end{align}
where $\mu'(\lambda_0) = \mu'_i(\lambda_0)$ for $i \in I_s$.\par
Let $ L = L_{\lambda_0} - \mu_0\E$, and let  $L' = L'_{\lambda_0} - \mu'(\lambda_0)\E$. Then it is easy to see that  \begin{align}\label{LPrime} L^* = \alpha L, \mbox{ and }(L')^* = \beta L'\end{align} where $\alpha = \sigma^{-1}$, and $\beta \in \Complex$ is some constant.
Let $\langle \,,\rangle$ be the standard Hermitian scalar product $\langle x,y\rangle = x^*y$.
Using (\ref{secDer}), we have
\begin{align}\label{muSecDer}
\left\langle\sum_{i \in I_s} c_i \mu''_i(\lambda_0)h_i(\lambda_0), h_j(\lambda_0)\right\rangle = X_{sj} + Y_{sj}
\end{align}
where 
\begin{align*}
\begin{aligned}
 X_{sj}= 2\left\langle L'\left(\sum_{i \in I_s} c_ih'_i(\lambda_0)\right), h_j(\lambda_0) \right\rangle, \quad Y_{sj} =  \left\langle   L\left( \sum_{i \in I_s} c_i h''_i(\lambda_0)\right), h_j(\lambda_0)\right\rangle.\end{aligned}
\end{align*}
Using \eqref{LPrime}, we show that $Y_{sj}$ vanishes:
\begin{align*}
 Y_{sj} =  \alpha\left\langle   \sum_{i \in I_s} c_i h''_i(\lambda_0),  Lh_j(\lambda_0)\right\rangle = 0.
\end{align*}
 Using  \eqref{LPrime} together with (\ref{qi}) and (\ref{linDep3}), we show that $X_{sj}$ is also zero:
\begin{align*}
\begin{aligned}
X_{sj} = 2\beta\left\langle\sum_{i \in I_s} c_ih'_i(\lambda_0),  L'h_j(\lambda_0) \right\rangle  =  -2\beta\left\langle\sum_{i \in I_s} c_ih'_i(\lambda_0),  Lh'_j(\lambda_0) \right\rangle = \\ =
 -2\beta\bar \alpha\left\langle  L\left(\sum_{i \in I_s} c_ih'_i(\lambda_0)\right), h'_j(\lambda_0) \right\rangle  = 2\beta\bar \alpha\left\langle  L'\left(\sum_{i \in I_s} c_ih_i(\lambda_0)\right), h'_j(\lambda_0) \right\rangle = 0.
\end{aligned}
\end{align*}
Using \eqref{muSecDer} we conclude that
\begin{align*}
\left\langle\sum_{i \in I_s} c_i \mu''_i(\lambda_0)h_i(\lambda_0), h_j(\lambda_0)\right\rangle = 0
\end{align*}
for any $s$ and $j$, hence
\begin{align*}
\sum_{i \in I_s} c_i \mu''_i(\lambda_0)h_i(\lambda_0) = 0\quad \forall \, s = 1, \dots, q. 
\end{align*}
Continuing in the same fashion, we obtain
\begin{align}\label{Vander2}
\sum_{i \in I_s} c_i (\mu''_i(\lambda_0))^kh_i(\lambda_0) = 0 \quad \forall \, s = 1, \dots, q, \quad k \in \mathbb N \cup \{0\}.
\end{align}
Since all pairwise intersections of irreducible components of $C$ are of order at most two,  we have $\mu''_i(\lambda_0) \neq \mu''_j(\lambda_0)$ for any distinct $i, j \in I_s$. Consequently, (\ref{Vander2}) implies that $c_i = 0$ for each $i$, q.e.d.

\end{proof}

\begin{statement}\label{divZeros}  The divisor of zeros of $r$ is given by
$
(r)_0 = \lambda((\lambda)_R)
$
where $(\lambda)_R$ is the ramification divisor of $\lambda$.
\end{statement}
\begin{proof}
This statement is well-known in the case when the spectral curve is non-singular \cite{DKN}, and as follows from Proposition \ref{lemma3}, singular points do not contribute to $(r)_0$.
	If the set $\lambda^{-1}(\lambda_0)$ contains $n$ distinct points, then Proposition \ref{lemma3} implies that $r(\lambda_0) \neq 0$. So, $r(\lambda_0) = 0$ if and only if $\lambda_0$ is a branch point.
	For simplicity, assume that the set $\lambda^{-1}(\lambda_0) $ contains exactly one simple ramification point of $\lambda$.
	Let $z$ be a local coordinate on $\Gamma$ such that $\lambda - \lambda_0 = z^2$.  Then
	\begin{align*}
	\begin{aligned}
	r(\lambda) &= \left( \det( \ldots, h(0) + h'(0)z  + \mathrm{O}(z^2), h(0) - h'(0)z  + \mathrm{O}(z^2), \ldots)\right)^2 =  \\ &= \left( \det( \ldots, 2h'(0)z  +  \mathrm{O}(z^2), h(0) + \mathrm{O}(z), \ldots)\right)^2 = \\ &= (\lambda - \lambda_0)\left( \det( \ldots, 2h'(0)  +  \mathrm{O}(z), h(0) + \mathrm{O}(z), \ldots)\right)^2.
	\end{aligned}
	\end{align*}
	By Proposition \ref{imm2}, the vector $h'(0)$ is a generalized eigenvector for $ L_{\lambda_0}$, so the latter determinant is non-zero, which proves that $r(\lambda)$ has a simple zero at $\lambda = \lambda_0$. 
	The proof in the general case is analogous.
\end{proof}
Since $r$ is a meromorphic function,
$
\deg (r)_0 = \deg (r)_\infty.
$
Using (\ref{divPoles}) and Proposition \ref{divZeros}, we have 
\begin{align*}
\deg(h)_\infty = \frac{1}{2}\deg (\lambda)_R,
\end{align*}
or, by Riemann-Hurwitz formula,
\begin{align*}
 \deg(h)_\infty =  n - k + \sum_{i=1}^{k} g_i
\end{align*}
where $g_i$ is the genus of $\Gamma_i$. If $g_i = 0$ for each $i$, then
\begin{align}\label{poles}
 \deg(h)_\infty =  n - k.
\end{align}
Let $\xi_1, \dots, \xi_n$ be the eigenvectors of $ B$ ordered in such a way that $h(\infty_i) = \xi_i$, and \begin{align*}\infty_1, \dots, \infty_{n_1} \in \Gamma_1, \quad \infty_{n_1+1}, \dots, \infty_{n_1+n_2} \in \Gamma_2,\quad \dots.\end{align*}
Denote by $h^1, \dots, h^n$ the components of $h$ in the basis $\xi_1, \dots, \xi_n$:
$
h =\sum h^i\xi_i.
$ Then $h^1(\infty_2) =  \dots = h^1(\infty_{n_1}) =0,$ but $h^1(\infty_1) \neq 0$.  Consequently, \begin{align*}\deg(h\mid_{\Gamma_1})_\infty \geq \deg(h^1\mid_{\Gamma_1})_\infty = \deg(h^1\mid_{\Gamma_1})_0 \geq n_1 - 1.\end{align*}Analogously,
\begin{align*}
\deg(h\mid_{\Gamma_i})_\infty \geq n_i - 1 \mbox{ for $i = 1, \dots, k$.}
\end{align*}
Comparing with (\ref{poles}), we conclude that
\begin{align}\label{poles2}
\deg(h\mid_{\Gamma_i})_\infty = n_i - 1.
\end{align}
Identify each $\Gamma_i$ with the standard Riemann sphere, and renormalize $h$ in such a way that all coordinates of $h\mid_{\Gamma_i}$ are polynomials of degree $n_i -1$. Each of these coordinates vanishes at exactly $n_i-1$ points which are fixed, so all coordinates of $h$ are defined uniquely modulo a constant factor. This means that $L_\lambda$ can be recovered uniquely up to conjugation by a matrix which commutes with the leading term $B$. Let us show that this matrix can be chosen to be unitary.\par
Let $L_\lambda = Q^{-1}L_\lambda^0 Q$. Then $L_\lambda^* = Q^*(L_\lambda^0)^*(Q^{-1})^*$ which can be rewritten as
\begin{align}
\label{conj1}
 L_{\tau(\lambda)} = Q^*L^0_{\tau(\lambda)}(Q^{-1})^*.
\end{align}
On the other hand,
\begin{align}
\label{conj2}
 L_{\tau(\lambda)} = Q^{-1}L^0_{\tau(\lambda)}Q.
\end{align}
Comparing \eqref{conj1} and \eqref{conj2}, we conclude that $L_\lambda^0$ commutes with $QQ^*$ for any $\lambda$. Let $S = \sqrt{QQ^*}$. Then the polar decomposition of $Q$ is $Q = SU$ where $U$ is unitary. Since $QQ^*$ commutes with $L_\lambda^0$, so does $S$, therefore
$$L_\lambda = U^{-1}L_\lambda^0 U.$$
Obvoiusly, $U$ commutes with the leading term $B$, so it belongs to the gauge group $G$, q.e.d.
\bibliographystyle{unsrt}
\bibliography{Lax}
\end{document}